\definecolor{vertFonce}	{rgb}{0,0.5,0}
\definecolor{numLignes}	{rgb}{0.17,0.57,0.7}	
\definecolor{gris}		{rgb}{0.5,0.5,0.5}
\definecolor{grisFonce}	{rgb}{0.2,0.2,0.2}
\definecolor{orange}	{rgb}{1,0.65,0.31}		
\definecolor{orangeFonce}{rgb}{1,0.4,0}
\definecolor{bleuFonce}	{rgb}{0,0,0.4}
\definecolor{rougeFonce}{rgb}{0.3,0,0}
\definecolor{rougeWord}	{rgb}{0.5,0,0}
\definecolor{vertClair}	{rgb}{0.8,1,0.8}
\definecolor{rougeClair}{rgb}{1,0.5,0.5}
\definecolor{violet}	{rgb}{0.5,0,0.5}
\newtheorem{lem}{Lemma}[section]
\newtheorem{theorem}{Theorem}[section]
\newtheorem{prop}{Proposition}[section]
\newtheorem{remark}{Remark}[section]
\newcommand		{\subsetArrow}	{\mathrel{\ooalign{$\subset$\cr%
\hidewidth\raise-.087ex\hbox{$_\shortrightarrow\mkern-1.5mu$}\cr}}}
\newcommand		{\subsetarrow}	{\mathrel{\ooalign{$\subset$\cr%
\hidewidth\raise-1.45ex\hbox{$\vec{}\mkern6mu$}\cr}}}
\newcommand		{\N}		{\mathbb N}			
\newcommand		{\RR}		{\mathbb R}			
\newcommand		{\R}		{\RR}
\newcommand		{\Z}		{\mathbb Z}
\newcommand		{\Rd}		{\R^d}
\newcommand		{\RRd}		{\R^{2d}}
\newcommand		{\CC}		{\mathbb C}			
\newcommand		{\cN}		{\mathcal N}		
\renewcommand	{\L}		{\mathcal L}		
\newcommand		{\cW}		{\mathcal W}		
\newcommand		{\cB}		{\mathcal B}
\newcommand		{\cX}		{\mathcal X}
\newcommand		{\cZ}		{\mathcal Z}
\newcommand		\sfA		{\mathsf A}
\newcommand		\sfB		{\mathsf B}
\newcommand		\sfH		{\mathsf H}			
\newcommand     {\ecF}      {\mathscr{F}}
\newcommand     {\ecS}      {\mathscr{S}}
\newcommand		{\lt}			{\left}				%
\newcommand		{\rt}			{\right}			%
\renewcommand	{\(}			{\lt(}
\renewcommand	{\)}			{\rt)}
\newcommand		{\bangle}[1]	{\lt\langle #1\rt\rangle}
\newcommand		{\inprod}[2]	{\bangle{#1, #2}}
\newcommand		{\com}[1]		{\lt[{#1}\rt]}		
\newcommand		{\n}[1]			{\lt\lvert #1 \rt\rvert}
\newcommand		{\nrm}[1]		{\lt\lVert #1\rt\rVert}
\newcommand		{\Nrm}[2]		{\nrm{#1}_{#2}}
\newcommand		{\floor}[1]		{\lt\lfloor{#1}\rt\rfloor}
\newcommand		{\indic}	{\mathds{1}}		
\renewcommand		{\d}		{\mathrm{d}}		
\newcommand			{\dd}		{\,\d}				
\newcommand			{\dpt}		{\partial_t}
\newcommand			{\dt}		{\frac{\d}{\d t}}	
\newcommand			{\grad}		{\nabla}
\newcommand			{\lapl}		{\Delta}
\newcommand			{\conj}[1]	{\overline{#1}}		
\DeclareMathOperator{\cF}		{\mathcal{F}}		
\DeclareMathOperator{\re}		{Re}				
\DeclareMathOperator{\tr}		{Tr}				
\newcommand			{\id}		{\mathbf{1}}
\newcommand			{\gibbs}	{\opom_N} 
\newcommand			{\sh}		{\operatorname{sh}}
\renewcommand		{\th}		{\operatorname{th}}
\renewcommand	{\Re}[1]		{\re\!\( #1 \)}		
\newcommand		{\Tr}[1]		{\tr\!\( #1 \)}		
\newcommand		\sfm		{\mathsf m}
\newcommand		{\intd}			{\int_{\Rd}}
\newcommand		{\iintd}		{\iint_{\RRd}}
\newcommand		{\init}			{\mathrm{in}}
\newcommand		{\cC}			{\mathcal{C}}
\newcommand		{\kb}			{k_{\mathrm{B}}}
\newcommand		{\op}		{\boldsymbol{\rho}}	
\newcommand		{\opal}		{\boldsymbol{\alpha}}
\newcommand		{\opG}		{\boldsymbol{\Gamma}}
\newcommand		{\opz}		{\boldsymbol{z}}	
\newcommand		{\opom}		{\boldsymbol{\omega}}
\newcommand		{\opg}		{\boldsymbol{g}}
\newcommand		{\opp}		{\boldsymbol{p}}
\newcommand		{\Dh}		{\boldsymbol{\nabla}}	
\newcommand		{\Dhx}[1]	{\Dh_{\!x} #1}			
\newcommand		{\Dhv}[1]	{\Dh_{\!\xi} #1}		
\DeclareMathOperator{\dG}	{\d\Gamma}			
\newcommand		{\h}		{\mathfrak{h}}		
\title[On the semiclassical regularity of thermal equilibria]{On the semiclassical regularity of thermal equilibria}
\author[J. Chong]{Jacky J. Chong}
\author[L. Lafleche]{Laurent Lafleche}
\address[J. Chong, L. Lafleche]{Department of Mathematics, The University of Texas at Austin, Austin, TX 78712, USA, {\tt jwchong@math.utexas.edu}, {\tt lafleche@math.utexas.edu}}
\author[C. Saffirio]{Chiara Saffirio}
\address[C. Saffirio]{Department of Mathematics and Computer Science, University of Basel, 4051 Basel, Switzerland, {\tt chiara.saffirio@unibas.ch}}
\subjclass[2010]{82C10, 35Q41, 35Q55 (82C05,35Q83).}
\keywords{semiclassical limit, fermionic thermal states, quasi-free Gibbs states, equilibrium}
\begin{document}

\begin{abstract}
	We study the regularity properties of fermionic equilibrium states at finite positive temperature and show that they satisfy certain semiclassical bounds. As a corollary, we identify explicitly a class of positive temperature states satisfying the regularity assumptions of [J.J.~Chong, L.~Lafleche, C.~Saffirio: arXiv:2103.10946 (2021)].
\end{abstract}

\maketitle

\bigskip

\renewcommand{\contentsname}{\centerline{Table of Contents}}
\setcounter{tocdepth}{2}	
\tableofcontents


\section{Introduction}

	We consider a system of $N$ non-interacting fermions in a harmonic trap and study the semiclassical structure of its one-particle density matrix associated with the corresponding Gibbs (equilibrium) state of the trapped gas at temperature $T>0$. More precisely, given a family of one-particle normalized nonnegative trace class operators $\op_{\beta}$ acting on $L^2(\R^d)$ and indexed by the parameter $\beta$, proportional to $T^{-1}$, we consider the following quantities
	\begin{equation}\label{eq:quantum-grad}
		\Dhx\op_{\beta} :=\com{\nabla,\op_{\beta}} \quad\text{ and }\quad \Dhv\op_{\beta} :=\com{\tfrac{x}{i\hbar},\op_{\beta}},
	\end{equation}
	which we refer to as the quantum gradients of $\op_{\beta}$, 
	and show that for any $p\in [2,\infty]$,
	\begin{equation*}
		\op_\beta,\, \sqrt{\op_\beta} \in \cW^{1,p}(\sfm)
	\end{equation*}
	uniformly in $\hbar$, where $\cW^{1,p}(\sfm)$ are the semiclassical analogue of weighted Sobolev spaces equipped with the norm 
	\begin{equation*}
		\Nrm{\op}{\cW^{1,p}(\sfm)}^p := \Nrm{\op\,\sfm}{\L^p}^p+\Nrm{\Dhx\op\,\sfm}{\L^p}^p + \Nrm{\Dhv\op\,\sfm}{\L^p}^p
	\end{equation*}
	 and $\sfm:=1+\n{\opp}^n$. Here $\opp=-i\hbar\nabla$ is the momentum operator and $\L^p$ are the semiclassical versions of the Lebesgue spaces endowed with the rescaled Schatten norms
	 \begin{equation*}
	 	\Nrm{\op}{\L^p} := h^{\frac{3}{p}} \Tr{\n{\op}^p}^{\frac{1}{p}}
	 \end{equation*}
	 where $h = 2\pi\hbar$.
	
	The quantities~\eqref{eq:quantum-grad} play key roles in obtaining semiclassical estimates, such as in the problems of the derivation of the Vlasov equation from quantum mechanics and of the Hartree and the Hartree--Fock equations from many-body quantum mechanics. To the author's knowledge, they first appeared in~\cite{benedikter_mean-field_2014}, where the Hartree equation was derived in the mean-field regime for arbitrary long times from a system of $N$ fermions interacting through a smooth potential, under the assumption that the initial state $\op^\init$ of the Hartree dynamics is a pure state, i.e. $(\op^\init)^2=\op^\init$, and satisfies the following semiclassical structures
	\begin{equation}\label{eq:semiclassical-bounds}
		\Nrm{\Dhx\op^\init}{\L^1} \leq C \quad\text{ and }\quad \Nrm{\Dhv\op^\init}{\L^1}\leq C.
	\end{equation}
	Similarly, bounds on semiclassical Schatten norms of commutators are used in~\cite{golse_empirical_2019} to obtain uniform in $\hbar$ mean-field estimates, in~\cite{lafleche_strong_2021} in the context of the semiclassical limit, in \cite{golse_convergence_2021} in the context of the convergence of numerical schemes, in~\cite{golse_semiclassical_2020} to get estimates on the quantum Wasserstein distance.
	
	When the interaction is the Coulomb potential only partial results are available on a time scale of order one (see \cite{porta_mean_2017}) and they rely on a semiclassical structure expressed in terms of $L^p$ norms of the diagonal of the operator $|\Dhv\op|$, with $p>3$ (see also \cite{lafleche_strong_2021}). 
	
	In~\cite{fournais_optimal_2019} and~\cite{benedikter_effective_2022}, the authors obtained semiclassical bounds of the form~\eqref{eq:semiclassical-bounds} for pure states in the non-interacting case, which in turn provided explicit examples of initial pure states that satisfy the assumptions in~\cite{benedikter_mean-field_2014} on the semiclassical structures. In this paper, we extend this analysis to the finite positive temperature case, where the one-particle reduced density matrix $\op$ is no longer a projection. When dealing with mixed states the analogue relevant quantities considered in~\cite{benedikter_mean-field_2016-1, chong_many-body_2021} are 
	\begin{equation}\label{eq:mixed-semiclassical-bounds}
		\Nrm{\Dhx\sqrt{\op^\init}}{\L^p} \leq C \quad\text{ and }\quad\Nrm{\Dhv\sqrt{\op^\init}}{\L^p} \leq C.
	\end{equation} 
	In~\cite{benedikter_mean-field_2016-1}, the interaction is smooth and Equation~\eqref{eq:mixed-semiclassical-bounds} is required to hold true for the the initial state for $p=1$. In~\cite{chong_many-body_2021}, singular interactions of the form $\n{x}^{-a}$ with $a\leq 1$ are considered and Equation~\eqref{eq:mixed-semiclassical-bounds} is required to hold for $p\in [2,\infty]$.
	 
	In this work, we focus on quasi-free Gibbs states, which is an important class of examples as they model the equilibria of an ideal gas at finite positive temperature. It is well-known that the one-particle reduced density matrix associated to a Gibbs state can be computed explicitly (see Proposition~\ref{prop:quasi-free}). Using the explicit expression \eqref{eq:explicit-form}, we prove bounds on the weighted $\L^p$ norms of the quantum gradients \eqref{eq:quantum-grad} with $\op_\beta$ given by the one-particle reduced density matrix associated to the Gibbs state of a system of non-interacting fermions confined by a harmonic potential, and on its partition function. Our main result is the following.
	
	\begin{theorem}\label{thm:main-thm}
		Let $H = \frac{\n{\opp}^2+\n{x}^2}{2}$ be the $d$-dimensional harmonic oscillator Hamiltonian and
		define the family of density matrix operators given by  
		\begin{equation}\label{eq:explicit-form}
			\op_\beta : =\dfrac{1}{N\,h^d}\(\id+e^{\beta(H-\mu)}\)^{-1}
		\end{equation}
		where $\mu$ depends on $\beta$, $N$ and $h$ and is chosen so that $h^d\tr(\op_\beta)=1$ holds. Then for any $p\geq 2$,
		$\op_\beta\in\mathcal{W}^{1,p}(\sfm)$ and $\sqrt{\op_\beta}\in\mathcal{W}^{1,p}(\sfm)$ uniformly with respect to $\hbar\in(0,1)$. More explicitly, let 
		\begin{equation}\label{eq:def_Z}
			Z_\beta:= h^d\tr(e^{-\beta H}), \quad \text{ and } \quad Z_\mu:= \lambda\, e^{-\beta\mu}
		\end{equation}
		with $\lambda := Nh^d$, then the following holds. For any fixed $\beta \in \R_+$, $\hbar \in (0, 1)$
		and for $p \in [1, \infty]$, we have the bound 
		\begin{equation}\label{thm:main_result_III}
			\Nrm{\Dh\op_\beta}{\L^p} \leq C_{d,p}\, \frac{\beta^{\frac{1}{2}-\frac{d}{p}}}{Z_\mu}  \frac{\max\(2\sqrt{2},\beta\hbar\)^{\frac{1}{2}-\frac{1}{p}}}{\(\theta(\beta\hbar)\)^\frac{1}{p}}
		\end{equation}
		where $\theta(x) = \th(x)/x$ with $\th(x) = \tfrac{e^x-e^{-x}}{e^x+e^{-x}}$, and 
		\begin{align}\label{thm:main_result_I}
			C_{\lambda,\beta}^{-1}\,Z_{\beta} \leq Z_\mu \leq Z_{\beta}
		\end{align}
		with $C_{\lambda,\beta} = 2$ if $\mu \le d\hbar/2$ and $C_{\lambda, \beta}= 1 + e^{\beta\lambda^{1/d}/\pi}$ if $\mu \ge d\hbar/2$.
	\end{theorem}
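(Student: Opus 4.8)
The plan is to work from the explicit form~\eqref{eq:explicit-form} and reduce everything to functional calculus for the harmonic oscillator $H$. The starting point is to write $\op_\beta = \lambda^{-1} h^{-d} g(H)$ with $g(E) = (1 + e^{\beta(E-\mu)})^{-1}$, and similarly $\sqrt{\op_\beta} = \lambda^{-1/2} h^{-d/2} \sqrt{g}(H)$. The key algebraic identity is that for any smooth function $f$, the commutators $[\nabla, f(H)]$ and $[x/(i\hbar), f(H)]$ can be expressed through the Helffer--Sjöstrand formula (or a direct Duhamel expansion of $e^{-sH}$) in terms of $[\nabla, H] = x$ and $[x, H] = -\hbar^2 \nabla$, i.e.\ in terms of the creation/annihilation operators $a_\pm = \frac{1}{\sqrt{2}}(x/\hbar \mp \hbar\nabla)$ that diagonalize $H$. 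Concretely, $\Dh f(H)$ is a linear combination of terms of the form $a_\pm \, \widetilde{f}(H)$ where $\widetilde f$ is a divided-difference of $f$ at spectral gaps of size $\hbar$; this is where the function $\theta(\beta\hbar) = \th(\beta\hbar)/(\beta\hbar)$ enters, as it measures the ratio of the discrete difference quotient of $e^{-\beta E}$ over the spectrum $\hbar(\N + d/2)$ to its continuous derivative.

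First I would establish the partition-function comparison~\eqref{thm:main_result_I}. The upper bound $Z_\mu \le Z_\beta$ is immediate from the normalization $h^d\tr(\op_\beta) = 1$, since $(\id + e^{\beta(H-\mu)})^{-1} \le e^{-\beta(H-\mu)}$ gives $\lambda \le e^{\beta\mu} h^d \tr(e^{-\beta H}) = e^{\beta\mu} Z_\beta$, hence $Z_\mu = \lambda e^{-\beta\mu} \le Z_\beta$. For the lower bound one splits into the two regimes for $\mu$: when $\mu$ is small one uses $(\id+e^{\beta(H-\mu)})^{-1} \ge \tfrac12 e^{-\beta(H-\mu)}$ on the bulk of the spectrum; when $\mu$ is large one needs the explicit harmonic-oscillator eigenvalue $\hbar d/2$ and the counting of eigenvalues below $\mu$, together with the crude bound $Z_\beta \ge$ (first eigenvalue contribution), to extract the factor $1 + e^{\beta\lambda^{1/d}/\pi}$; the constant $\lambda^{1/d}/\pi$ should come from a Weyl-law estimate of $\mu$ in terms of $\lambda$ via $h^d\tr((\id+e^{\beta(H-\mu)})^{-1})=1$.

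Next, for the gradient bound~\eqref{thm:main_result_III}, I would treat $p=\infty$ and $p=1$ and interpolate. For $p=\infty$ one bounds the operator norm of $\Dh\op_\beta$ directly on each eigenspace of $H$: writing $g$ in the diagonalizing basis, $\Dh g(H)$ acts between adjacent Landau-type levels with matrix elements controlled by $\sup |g'|$ times $\sqrt{\hbar(\text{level index})}$, and the weight $\sfm = 1+|\opp|^n$ is diagonal too, so the whole thing reduces to a one-dimensional sup over $\N$; the $\max(2\sqrt2,\beta\hbar)$ appears here because when $\beta\hbar$ is large the difference quotient is governed by $1/(\beta\hbar)$ rather than the derivative. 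For $p=1$ one instead sums the Schatten-$1$ contributions over levels, which is where $\theta(\beta\hbar)$ in the denominator (with exponent $1$) and the factor $\beta^{1/2-d}$ come from, via $\sum_k \sqrt{\hbar k}\,|g'(\hbar(k+d/2))|\,(1+(\hbar k)^n) \lesssim \beta^{1/2-d}\,(\beta\hbar)^{-1}\theta(\beta\hbar)^{-1}$ after comparing the sum to an integral using the Euler--Maclaurin / monotonicity of $g'$. Then Riesz--Thorin interpolation in the Schatten scale gives the stated exponents $\beta^{1/2-d/p}$, $\max(2\sqrt2,\beta\hbar)^{1/2-1/p}$, $\theta(\beta\hbar)^{-1/p}$, and the overall $1/Z_\mu$ is just the normalization prefactor $\lambda^{-1}h^{-d} = (Z_\mu)^{-1}e^{-\beta\mu}$ times $e^{\beta\mu}$ absorbed into the spectral sum. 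The bound on $\sqrt{\op_\beta}$ follows the same scheme with $g$ replaced by $\sqrt g$, noting $|(\sqrt g)'| = |g'|/(2\sqrt g) \le \tfrac12 |g'|^{1/2}\,e^{\beta(E-\mu)/2}$-type estimates keep the same decay.

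The main obstacle I expect is the careful bookkeeping in the $p=1$ sum: one must show that replacing the continuous derivative of $e^{-\beta E}$ by the finite difference over the harmonic spectrum costs exactly the factor $\theta(\beta\hbar)^{-1}$ uniformly in $\hbar\in(0,1)$ and $\beta\in\R_+$, including the transition region $\beta\hbar\sim 1$, and that the weight $\sfm = 1+|\opp|^n$ — which is \emph{not} a function of $H$ alone — can still be controlled by a function of $H$ up to constants depending only on $n$ and $d$ (using $|\opp|^2 \le 2H$). Handling the weight rigorously, rather than the cleaner case $\sfm\equiv 1$, and tracking the $\hbar$-dependence through the interpolation so that the final bound is genuinely uniform for $\hbar\in(0,1)$, is the delicate part.
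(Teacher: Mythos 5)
Your plan for the fugacity comparison \eqref{thm:main_result_I} essentially coincides with the paper's argument (the normalization identity gives $Z_\mu\le Z_\beta$; for $\mu\ge d\hbar/2$ one bounds $\mu\le 2N^{1/d}\hbar+d\hbar/2$ by counting harmonic-oscillator eigenvalues below $\mu$ and feeds this back into the same identity), so that part is fine. The genuine gap is your $\L^1$ endpoint for \eqref{thm:main_result_III}. For your interpolation $\Nrm{T}{\L^p}\le\Nrm{T}{\L^1}^{1/p}\Nrm{T}{\L^\infty}^{1-1/p}$ to reproduce the stated bound you need exactly $\Nrm{\Dh\op_\beta}{\L^1}\le C_d\,\beta^{\frac12-d}Z_\mu^{-1}\max(2\sqrt2,\beta\hbar)^{-\frac12}\theta(\beta\hbar)^{-1}$, but the sum you display, $\sum_k\sqrt{\hbar k}\,\n{g'(\hbar(k+\frac d2))}\,(1+(\hbar k)^n)$, omits both the degeneracies $\binom{k+d-1}{d-1}\sim k^{d-1}$ and the factor $h^d$ in the definition of $\Nrm{\cdot}{\L^1}$ — and it is precisely these that generate the dimensional factor $\beta^{-d}$ (a $Z_\beta$-type partition sum). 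Moreover the right-hand side you assert, $\beta^{\frac12-d}(\beta\hbar)^{-1}\theta(\beta\hbar)^{-1}$, is not the endpoint your scheme requires: the correct $p=1$ endpoint carries no $(\beta\hbar)^{-1}$ in the regime $\beta\hbar\lesssim1$, and a spurious $(\beta\hbar)^{-1}$ there would propagate to $(\beta\hbar)^{-1/p}$ after interpolation and destroy the uniformity in $\hbar\in(0,1)$ that is the whole point of the theorem. So the crucial endpoint is neither correctly stated nor proved, and the bookkeeping you defer as "delicate" is in fact the entire content of the estimate. Two secondary issues: $\sfm=1+\n{\opp}^n$ is not diagonal in the Hermite basis, and your proposed fix $\n{\opp}^2\le 2H$ does not yield $\n{\opp}^n\le(2H)^{n/2}$ for $n>2$ (operator monotonicity fails), so the weighted claim needs a separate argument; and, minor, $\op_\beta=\lambda^{-1}g(H)$, without the extra $h^{-d}$.

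For comparison, the paper avoids any trace-class endpoint. A Duhamel formula reduces $\Dh\op_\beta$ to $\opz\,e^{-s\beta H}$ with $s\in[\frac12,1]$ (your ladder-operator/divided-difference representation is a viable alternative to this step); the $\L^\infty$ endpoint is proved by a differential-inequality argument on $\Nrm{\n{x}^ne^{-tH}\varphi}{L^2}$; the $\L^2$ norm is computed \emph{exactly} from the Wigner transform of the thermal state via the Laguerre generating function (Mehler formula), which is where $\theta(\beta\hbar)$ actually enters — as the quantum correction to the Gaussian width, not as a sum-versus-integral ratio; and interpolation is performed between $p=2$ and $p=\infty$, consistent with the range $p\ge2$ in which membership in $\cW^{1,p}(\sfm)$ is claimed. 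If you wish to keep your spectral route, the cleanest repair is to replace the $p=1$ endpoint by this exactly computable $p=2$ endpoint, $\Nrm{\opz\,G}{\L^2}^2=h^d\Tr{\n{z}^2e^{-2\beta H}}$, which removes the degeneracy bookkeeping and delivers the factor $\theta(\beta\hbar)^{-1/p}$ with the correct constants.
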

	
	\begin{remark}
	As a corollary of our result, we exhibit a class of initial states satisfying the regularity assumptions of \cite[Theorem~3.1]{chong_many-body_2021}.
	\end{remark}
	
	\begin{remark}
		Our proof in the case $p=\infty$ is rather general and in particular can be adapted without difficulty to other external trapping potentials $V(x)$, that is to the case of an Hamiltonian of the form $H = \frac{\n{\opp}^2}{2} + V(x)$.
	\end{remark}
	
	\begin{remark}
		As indicated in \cite[Figure 1]{chong_many-body_2021}, there are three scaling regimes to keep in mind when studying mean-field dynamics of quantum systems (both bosonic and fermionic), namely, $N h^d\rightarrow 0, N h^d\rightarrow \infty$, and $N h^d$ converges to some constant (in fact, $N h^d=1$ which we called the critical scaling). 

		Let $\beta>0$ be fixed. In the {lower density} scaling regime $N h^d\rightarrow 0$,  we have that $e^{\beta\mu}\rightarrow 0$ or equivalently $\mu\rightarrow -\infty$, which is an immediate consequence of  \eqref{thm:main_result_I} in the above theorem and the fact that $Z_\beta$ is bounded (cf. Section \ref{sect:bounds_on_inverse_fugacity}). It is not difficult to see that in this case $\op_\beta=\(Nh^d + Z_\mu\,e^{\beta H}\)^{-1}$ will converge to $Z_\beta^{-1} \,e^{-\beta H}$. In other words, in the semiclassical limit $\hbar\rightarrow 0$, $\op_\beta$ will converge to the Maxwell--Boltzmann distribution. This result should not be surprising to the reader since in the lower density scaling regime $Nh^d\to 0$, the Fermi gas behaves like a classical ideal gas. 
		
		In the case of the higher density scaling regime $N h^d\rightarrow \infty$, we recall that as in \cite[Equation~(16)]{chong_many-body_2021}, $\lambda = Nh^d \leq \frac{1}{\Nrm{\op}{\L^\infty}} =: \cC_\infty^{-1}$. This shows that $\Nrm{\op}{\L^\infty}\rightarrow 0$ as $\lambda\rightarrow \infty$. In short, the higher density scaling regime is not accessible to Fermi gases. From a more physical perspective, a high density Fermi gas exhibits quantum ``degeneracy'' behavior, consequence of the Pauli exclusion principle, which limits the gas from getting arbitrarily dense. 
		
		Finally, in the critical scaling case, we see that $\op_\beta$ does not converge to the Maxwell--Boltzmann distribution but instead to the Fermi--Dirac distribution. The divergence from classical behavior is attributed to the fact that at the critical scaling we are in fact modeling the degeneracy behavior of the Fermi gas. See e.g.~\cite[Chapter 17.5]{chapman_mathematical_1990}.
	\end{remark}
	
	\begin{remark}
		In the semiclassical limit $\hbar\to 0$, then our main Inequality~\eqref{thm:main_result_III} implies
		\begin{equation*}
			\Nrm{\Dh\op_\beta}{\L^p} \leq C_{d,p,\lambda}\, \beta^{\frac{1}{2}+\frac{d}{p'}} + o(\hbar)
		\end{equation*}
		which is optimal in the sense that it is consistent with the behavior of the classical norm $\Nrm{\nabla\(Z_\beta^{-1}e^{-\beta\n{z}^2}\)}{L^p(\RRd)}$ in terms of $\beta$. It is however interesting to see that this scaling seems to change at low temperature and when $\hbar$ is not negligible (i.e. when $\beta\hbar$ is large).
	\end{remark}

\subsection{Second quantization formalism}
	To introduce quasi-free Gibbs states, we will need the second 
	quantized description of a many-body system of non-interacting fermions in an external trap. 
	Here we give a brief review of the Fock space formalism. For a more comprehensive treatment, the reader could consult \cite{bratteli_operator_1981, solovej_many_2014}.

	The fermionic (antisymmetric) Fock space over the one-particle Hilbert space $\h = L^2(\R^d)$ is defined to be the closure of
	\begin{equation*}
		\cF :=\bigoplus^\infty_{n=0} \h^{\wedge n} = \CC \oplus \h \oplus (\h\wedge \h)\oplus \cdots
	\end{equation*} 
	with respect to the norm induced by the standard associated inner product. Here the $n$-th sector of $\cF$ given by $\h^{\wedge n}$ denotes the subspace of $\h^n = L^2(\R^{dn})$ containing functions that satisfy the antisymmetry property
	\begin{equation*}
		\psi(x_1, \ldots, x_n) = \operatorname{sgn}(\sigma)\, \psi(x_{\sigma(1)}, \ldots, x_{\sigma(n)})
	\end{equation*} 
	for any permutation $\sigma$. An important vector to note is the vacuum vector $\Omega = (1, 0, 0, \ldots)\in \cF$ which describes a pure state with no particles. 

	The operator-valued distribution $a_x$ and $a^\ast_x$ on $\cF$ are defined by their actions on the $n^\text{th}$ sector of $\cF$ as follows. Let $\psi \in \h^{\wedge n}$, then  
	\begin{align*}
		(a_x\psi)(x_1, \ldots, x_{n-1}) &:= \sqrt{n}\,\psi(x, x_1, \ldots, x_{n-1}),
		\\
		(a_x^\ast\psi)(x_1,\ldots, x_{n+1}) &:= \sum^{n+1}_{j=1}\frac{(-1)^{j-1}}{\sqrt{n+1}}\, \delta_x(x_j)\, \psi(x_1,\ldots,\cancel{x_j},\ldots, x_{n+1}).
	\end{align*}
	Furthermore, for each $\varphi \in \h$, we associate to it the annihilation operator and its adjoint the creation operator, defined by
	\begin{equation*}
		a(\varphi)= \intd \conj{\varphi(x)}\,a_x\dd x \quad\text{ and } \quad a^\ast(\varphi)= \intd \varphi(x)\,a^\ast_x\dd x.
	\end{equation*}
	It can be shown that creation and annihilation operators are bounded on $\cF$ with operator norm $\Nrm{a(\varphi)}{\infty} = \Nrm{a^\ast(\varphi)}{\infty} = \Nrm{\varphi}{\h}$ and that they satisfy the canonical anticommutation relations, i.e. 
	\begin{equation*}
		\{a(\varphi), a(\phi)\} = \{a^\ast(\varphi), a^\ast(\phi)\} = 0 \quad \text{ and } \quad \{a^\ast(\varphi), a(\phi)\}  = \inprod{\phi}{\varphi}_\h\id
	\end{equation*}
	hold for all $(\varphi, \phi) \in \h^2$ where $\{\sfA, \sfB\}:=\sfA\sfB+\sfB\sfA$ is the standard anticommutator of the operators $\sfA$ and $\sfB$.

	In this work, we are interested in trapped non-interacting Fermionic systems modeled by the one-particle Hamilton operator 
	\begin{equation}\label{eq:trap-hamiltonian} 
		H = \tfrac{1}{2} \n{\opp}^2 + V(x),
	\end{equation}
	where $\opp = -i \hbar \grad_x$ and $V$ is an external trapping potential. The second quantization of the one-particle operator $H$ on $\cF$  is defined to be the operator $\sfH=\dG(H):=\intd a^\ast_x H a_x\dd x$ whose action on the $n^\text{th}$ sector is given by 
	\begin{equation*}
		\(\dG(H)\Psi\)(x_1,\ldots, x_n)= \sum^n_{j=1} (-\tfrac{\hbar^2}{2}\lapl_{x_j}+V(x_j))\Psi(x_1,\ldots, x_n).
	\end{equation*}
	Another useful operator on $\cF$ is the number operator
	\begin{equation*}
		\cN = \bigoplus^\infty_{n=0} n\,\id_{\h^{\wedge n}}.
	\end{equation*}
	
\subsection{Harmonic oscillators}\label{sec:harmonic_oscillator}

	To simplify the presentation, we choose the external potential $V$ to be the harmonic trapping potential, that is, 
	\begin{equation}\label{eq:harmonic-hamiltonian}
		H = \tfrac12 \n{\opp}^2+ \tfrac12 \n{x}^2 = \sum^d_{i=1} \tfrac12 \,\opp_i^2+\tfrac12\, x_i^2 = \sum^d_{i=1}H_i.
	\end{equation}
	Since $H$ is the $d$-dimensional quantum harmonic oscillator, it is well-known that it emits a discrete spectrum where the eigenvalues and eigenvectors are given by 
	\begin{align*}
		E_{n} &= \(\n{n}_1 + \tfrac{d}{2}\) \hbar &&\text{for } n = (n_1, \ldots, n_d) \in \N_{0}^d,
		\\
		u_{E_{n}} &= \phi_{n_1}\otimes\cdots \otimes\phi_{n_d} &&\text{with } H_i\phi_{n_i} = \(n_i+\tfrac12\)\hbar\, \phi_{n_i}.
	\end{align*}
	where $\n{n}_1 = n_1+\ldots+n_d$ and the functions $\phi_{n_i}$ are the standard one-variable Hermite functions
	\begin{equation*}
		\phi_{n_i}(x_i)= c_{n_i} \,e^{-x^2/2}P_{n_i}(x_i) \quad \text{ with } \quad c_{n_i}^{-1} = 2^{\frac{n_i}{2}} (n_i!)^{\frac12} \(\pi\hbar\)^\frac14, 
	\end{equation*}
	with $P_{n_i}(x)$ the Hermite polynomials. Moreover, for each eigenvalue $E_{n}$ the corresponding multiplicity is given by $g_{\n{n}_1, d} = \binom{\n{n}_1+d-1}{d-1}$. In particular, by the spectral decomposition of $H$, we could now rewrite $\sfH$ as 
	\begin{equation}
		\sfH = \sum_{n\in \Z_0^d} E_{n} \, a^\ast(u_{E_n})\,a(u_{E_n}).		
	\end{equation}

\subsection{Thermal states}
	
	We consider quantum states that are given by density operators $\opom$ on $\cF$, that is, the expectation of an observable $\sfA$ is given by $\bangle{\sfA}_{\opom} = \tr_{\cF}\(\sfA\, \opom\)$ for every $\sfA \in \cB(\cF)$, where $\cB(\cF)$ denotes the space of bounded operators on the Fock space. The Gibbs equilibrium of a trapped non-interacting Fermi gas associated to some positive temperature $T>0$ is defined to be the unique minimizer of the Gibbs free energy functional
	\begin{equation*}
		\ecF(\opom) = \tr_{\cF}\(\sfH\,\opom\) - \kb\, T\,\ecS(\opom) \quad \text{ with } \quad \ecS(\opom)=-\tr_{\cF}\(\opom\, \log\opom\)
	\end{equation*}
	in $\cX_N=\{\opom \in \cB(\cF) \mid \opom \ge 0,\ \tr_{\cF}\(\opom\)=1,\ \tr_{\cF}\(\cN\,\opom\)=N\}$ where $\ecS$ is the von Neumann entropy. It can be checked that the Gibbs state associated to the temperature $T$ is given by the normalized positive trace class operator
	\begin{equation}\label{def:gibbs_states}
		\gibbs = \frac{1}{\cZ_N}\,e^{-\beta \(\sfH - \mu\cN\)}
	\end{equation}
	where $\beta = 1/(\kb\,T)$ and the chemical potential $\mu$ is chosen so that $\Tr{\cN \gibbs}=N$. 
	Here $\cZ_N$ is the grand canonical partition function
	\begin{equation*}
		\cZ_N = 1+\sum_{n=1}^\infty e^{n\beta\mu} \tr_{\h^{\wedge n}} \exp\bigg(-\beta \sum^{dn}_{j=1} H_j\bigg)
		= \prod_{n \in \Z_0^d} (1+e^{-\beta (E_{n}-\mu)}).
	\end{equation*} 
	
\subsection{Quasi-free states}

	Here we will only give a rudimentary introduction to the tools necessary for the subsequent sections. A more comprehensive exposition of the following content can be found in \cite{bratteli_operator_1981, solovej_many_2014}. The state $\opom$ is said to be quasi-free if for all $n\in \N$, we have that
	\begin{align*}
		&\small \tr_{\cF}\(a^{\sharp_1}(f_1)\cdots a^{\sharp_{2n+1}}(f_{2n+1})\opom\) = 0,
		\\
		&\small \tr_{\cF}\(a^{\sharp_1}(f_1)\cdots a^{\sharp_{2n}}(f_{2n})\opom\) = \sum_{\sigma} (-1)^{\sigma} \prod^n_{j=1} \Tr{a^{\sharp_{\sigma(j)}}(f_{\sigma(j)})a^{\sharp_{\sigma(j+n)}}(f_{\sigma(j+n)})\opom},
	\end{align*}
	where $a^{\sharp_j}$ stands for either $a$ or $a^*$, and the sum is taken over all permutations $\sigma$ satisfying $\sigma(1) < \sigma(2)<\ldots <\sigma(n)$, and $\sigma(j) < \sigma(j+n)$, for all $j\in\set{1,\ldots, n}$. The definition indicates that quasi-free states are determined by the one-particle reduced density matrix operator defined via its integral kernel 
	\begin{align}
		\op(x, y)= \frac{1}{N h^d}\tr_{\cF}\(a^\ast_y\, a_x\, \opom\),
	\end{align}
	and the antisymmetric pairing function defined by 
	\begin{align}
		\opal(x, y) = \frac{1}{N h^d}\tr_{\cF}\(a_y \, a_x\, \opom\).
	\end{align}
	More compactly, we introduce the generalized one-particle density matrix $\opG:\h\oplus\h^\ast\rightarrow \h\oplus\h^\ast$ defined by 
	\begin{align}\label{def:generalized_opdm}
		\opG = 
		\begin{bmatrix}
			\op & \opal\\
			-\mathds{J}\opal\mathds{J} & \id -\mathds{J}\op \mathds{J}^\ast
		\end{bmatrix}
	\end{align}
	where $\mathds{J}:\h\rightarrow \h^\ast$ is  the map $\mathds{J}(\phi)=\inprod{\phi}{\cdot}$ and $\mathds{J}^\ast$ is its adjoint operator. This observation is summarized by the following proposition (see e.g. \cite[Appendix G]{solovej_many_2014}).
	
	\begin{prop}
		Let $\opG:\h\oplus\h^\ast\rightarrow \h\oplus\h^\ast$ be an operator of the form \eqref{def:generalized_opdm}. Then $\opG$ is the generalized one-particle density matrix of some quasi-free state with finite particle number if and only if $\opG\ge 0$ and $\tr\op<\infty$. 
	\end{prop}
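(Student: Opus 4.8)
The proposition is the standard characterization of quasi-free states through their generalized one-particle density matrix, so the plan is to follow the classical argument via Bogoliubov transformations (as in~\cite[Appendix G]{solovej_many_2014}), which I outline here. For \emph{necessity}, suppose $\opom$ is quasi-free with $\tr_\cF(\cN\opom)<\infty$. Summing the kernel of $\op$ against an orthonormal basis $(e_i)$ of $\h$ gives $Nh^d\,\tr\op=\sum_i\tr_\cF\big(a^\ast(e_i)a(e_i)\,\opom\big)=\tr_\cF(\cN\opom)<\infty$, so $\tr\op<\infty$. For $\opG\ge 0$ one does not even need quasi-freeness: given $F=f\oplus\mathds{J}g\in\h\oplus\h^\ast$ set $A(F):=a(f)+a^\ast(g)$, so that $A(F)^\ast=a^\ast(f)+a(g)$; expanding $A(F)^\ast A(F)$ into monomials in $a,a^\ast$, applying the canonical anticommutation relations, and inserting the definitions of $\op$ and $\opal$ identify the block matrix~\eqref{def:generalized_opdm} as exactly the matrix of the quadratic form $F\mapsto\frac{1}{Nh^d}\tr_\cF\big(A(F)^\ast A(F)\,\opom\big)$, which is nonnegative since $A(F)^\ast A(F)\ge 0$ and $\opom\ge 0$.

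For \emph{sufficiency}, suppose $\opG$ of the form~\eqref{def:generalized_opdm} satisfies $\opG\ge 0$ and $\tr\op<\infty$. First I would extract the structural consequences. Writing $\cC=\left[\begin{smallmatrix}0&\mathds{J}^\ast\\ \mathds{J}&0\end{smallmatrix}\right]$ for the (antiunitary) charge conjugation, a direct block computation using $\mathds{J}^\ast\mathds{J}=\id_\h$ gives $\cC\opG\cC=\id-\opG$, so $\opG\ge 0$ upgrades to $0\le\opG\le\id$, in particular $0\le\op\le\id$; and positivity of the block operator with diagonal entries $\op$ and $\id-\mathds{J}\op\mathds{J}^\ast\le\id$ forces $\opal\opal^\ast\le\op$, whence $\tr(\opal^\ast\opal)\le\tr\op<\infty$ and $\opal$ is Hilbert--Schmidt. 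Then I would diagonalize $\opG$ and reduce to the gauge-invariant case: because $0\le\opG\le\id$ and $\cC\opG\cC=\id-\opG$, there is a $\cC$-commuting unitary $\mathbb{V}$ on $\h\oplus\h^\ast$ (a Bogoliubov transformation) with $\mathbb{V}\opG\mathbb{V}^\ast=\left[\begin{smallmatrix}\gamma&0\\ 0&\id-\mathds{J}\gamma\mathds{J}^\ast\end{smallmatrix}\right]$, where $0\le\gamma\le\id$ is trace class and the off-diagonal part of $\mathbb{V}$ is Hilbert--Schmidt (both using $\tr\op<\infty$ and $\opal\in\mathfrak{S}^2$), so $\mathbb{V}$ is implemented by a unitary $\mathbb{V}_\cF$ on $\cF$ by the Shale--Stinespring criterion. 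For the gauge-invariant piece, diagonalizing $\gamma=\sum_i\lambda_i\,\ket{e_i}\bra{e_i}$ with $\lambda_i\in[0,1]$ (only finitely many equal to $1$, since $\sum_i\lambda_i=\tr\gamma<\infty$), one takes $\tilde\opom$ to be the product over the modes $(e_i)$ of the single-mode states $(1-\lambda_i)\ket{0}\bra{0}+\lambda_i\ket{1}\bra{1}$ --- equivalently $\cZ^{-1}e^{-\dG(h)}$ with $h=\log(\gamma^{-1}-\id)$ on the spectral subspace $\{0<\gamma<\id\}$, the modes with $\lambda_i=1$ being filled by creation operators and those with $\lambda_i=0$ contributing the vacuum; the condition $\tr\gamma<\infty$ is exactly what makes $\tilde\opom$ trace class, and Wick's theorem shows that $\tilde\opom$ is quasi-free with one-particle density matrix $\gamma$ and vanishing pairing. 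Finally $\opom:=\mathbb{V}_\cF^\ast\,\tilde\opom\,\mathbb{V}_\cF$ is quasi-free (Bogoliubov conjugation preserves quasi-freeness), has generalized one-particle density matrix $\mathbb{V}^\ast(\mathbb{V}\opG\mathbb{V}^\ast)\mathbb{V}=\opG$, and satisfies $\tr_\cF(\cN\opom)=Nh^d\tr\op<\infty$.

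The step I expect to be the crux is the Bogoliubov diagonalization together with the verification of the Shale--Stinespring condition and of the trace-class property of $\gamma$: this is precisely where $\tr\op<\infty$ is used beyond merely guaranteeing finite particle number, and it also requires some care at the boundary eigenvalues $0$ and $1$ of $\opG$. Since all of this is classical, in the paper we simply refer to~\cite[Appendix G]{solovej_many_2014}.
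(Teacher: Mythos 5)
The paper gives no proof of this proposition at all—it defers to the cited reference (Solovej's notes, Appendix~G)—and your outline is exactly that standard argument (necessity via the CAR and the positivity of $F\mapsto\tr_{\cF}\(A(F)^\ast A(F)\,\opom\)$, sufficiency via Bogoliubov diagonalization, Shale--Stinespring implementability, and an explicit gauge-invariant product state), so it is correct in outline and consistent with what the paper relies on. One small caveat: with the paper's $\frac{1}{Nh^d}$-normalization of $\op$ and $\opal$ but an unnormalized $\id$ in the lower-right block of \eqref{def:generalized_opdm}, the identification of $\opG$ with the quadratic form $F\mapsto\frac{1}{Nh^d}\tr_{\cF}\(A(F)^\ast A(F)\,\opom\)$ is not exact (that form has $\frac{1}{Nh^d}\id-\mathds{J}\op\mathds{J}^\ast$ in the lower-right corner), so the statement—and your proof—really pertain to the standard unnormalized convention $\op(x,y)=\tr_{\cF}\(a^\ast_y a_x\,\opom\)$, with the relations $0\le\opG\le\id$ and $\tr_\cF(\cN\opom)=\tr\op$ holding in that convention.
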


	In the case of $\gibbs$, it can be readily shown that its generalized one-particle density matrix is given by 
	\begin{equation*}
		\opG_{\beta}
		=
		\begin{bmatrix}
			\op_\beta & 0
			\\
			0 & \id-\mathds{J}\op_\beta \mathds{J}^\ast
		\end{bmatrix}
	\end{equation*}
	where $\opal = 0$, which we refer to as the gauge-invariant condition. The following proposition, whose proof can be found for example in \cite[Proposition~5.2.23]{bratteli_operator_1981}, makes the link between the Fock space Gibbs state~\eqref{def:gibbs_states} and the one body Gibbs state considered in our main theorem.
	
	\begin{prop}\label{prop:quasi-free}
		Let $\gibbs$ denotes the Gibbs state \eqref{def:gibbs_states} with $\Tr{e^{-\beta\sfH}} <\infty$. Then $\gibbs$ is quasi-free and its one-particle reduced density matrix operator is given by Equation~\eqref{eq:explicit-form}.
	\end{prop}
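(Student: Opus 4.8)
The plan is to use that $\gibbs = \cZ_N^{-1}\, e^{-\beta(\sfH - \mu\cN)}$ is the Gibbs state of the \emph{quadratic} second-quantized Hamiltonian $\sfH - \mu\cN = \dG(H - \mu)$. That such a state is (gauge-invariant) quasi-free is classical — it follows from the KMS characterization of quasi-free dynamics together with the canonical anticommutation relations, and is precisely \cite[Proposition~5.2.23]{bratteli_operator_1981} — so I would simply quote it. It then remains to identify $\op_\beta$ explicitly, which I would do by a direct two-point-function computation.

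First I would record a pull-through identity. Since $\com{\dG(A), a^\ast(f)} = a^\ast(Af)$ and $\com{\cN, a^\ast(f)} = a^\ast(f)$, taking $f = u_k := u_{E_k}$ an eigenvector of the harmonic oscillator $H$ (Section~\ref{sec:harmonic_oscillator}) gives $\com{\sfH - \mu\cN,\, a^\ast(u_k)} = (E_k - \mu)\, a^\ast(u_k)$, and hence
\begin{equation*}
	e^{-\beta(\sfH - \mu\cN)}\, a^\ast(u_k) = e^{-\beta(E_k - \mu)}\, a^\ast(u_k)\, e^{-\beta(\sfH - \mu\cN)},
\end{equation*}
with the reciprocal factor $e^{+\beta(E_k-\mu)}$ when $a^\ast(u_k)$ is replaced by $a(u_k)$. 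Since $\Tr{e^{-\beta\sfH}} < \infty$ is equivalent to $\tr_{\h}(e^{-\beta H}) < \infty$, the operator $e^{-\beta(\sfH-\mu\cN)}$ is trace class, so all the manipulations below make sense once restricted to the dense span of occupation-number eigenstates, on which $e^{-\beta\sfH}$ acts diagonally.

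Then I would compute the two-point functions. Writing $n_k := \tr_{\cF}(a^\ast(u_k)\, a(u_k)\, \gibbs)$, the pull-through identity, cyclicity of the trace, and the CAR identity $a(u_k)\, a^\ast(u_k) = \id - a^\ast(u_k)\, a(u_k)$ yield the self-consistency relation $n_k = e^{-\beta(E_k - \mu)}(1 - n_k)$, whence $n_k = (1 + e^{\beta(E_k - \mu)})^{-1}$. An analogous computation shows that for $j \neq k$ the quantity $\tr_{\cF}(a^\ast(u_j)\, a(u_k)\, \gibbs)$ equals $-e^{-\beta(E_j-\mu)}$ times itself, and therefore vanishes; and the gauge symmetry $\com{\sfH - \mu\cN, \cN} = 0$ (equivalently, invariance of $\gibbs$ under conjugation by $e^{i\theta\cN}$) forces the pairing $\opal$ to vanish identically. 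Since $\set{u_k}$ is an orthonormal eigenbasis of $H$, the spectral theorem then gives
\begin{equation*}
	\op_\beta = \frac{1}{N h^d}\sum_k \frac{1}{1 + e^{\beta(E_k - \mu)}}\, \ket{u_k}\bra{u_k} = \frac{1}{N h^d}\(\id + e^{\beta(H - \mu)}\)^{-1},
\end{equation*}
which is Equation~\eqref{eq:explicit-form}. As a byproduct, $h^d\tr(\op_\beta) = N^{-1}\sum_k n_k = N^{-1}\Tr{\cN\gibbs} = 1$, so the condition $\Tr{\cN\gibbs} = N$ fixing $\mu$ in~\eqref{def:gibbs_states} is the same as the condition $h^d\tr(\op_\beta) = 1$ appearing in the statement.

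I expect the only genuinely delicate point to be the rigorous justification of the pull-through identities for the unbounded operator $\sfH$; this is why I would set up the computation on the occupation-number basis, where $e^{-\beta\sfH}$ is diagonal and trace class by hypothesis. Beyond that, everything is a routine manipulation of the anticommutation relations and the cyclicity of the trace.
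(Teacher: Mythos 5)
Your proposal is correct and follows essentially the same route as the paper, which gives no proof of its own but simply cites \cite[Proposition~5.2.23]{bratteli_operator_1981} — the same classical result you invoke for quasi-freeness. Your pull-through/cyclicity computation of the two-point functions, the vanishing of the off-diagonal and pairing terms, and the resulting occupation numbers $n_k=(1+e^{\beta(E_k-\mu)})^{-1}$ giving Equation~\eqref{eq:explicit-form} (with the normalization $h^d\tr(\op_\beta)=1$ matching $\tr_{\cF}(\cN\gibbs)=N$) is precisely the standard argument behind that citation, and it is carried out correctly.
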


\section{Semiclassical thermal regularity}

In this section, we give the proof of the main theorem. 

\subsection{Preliminaries}	
	For a trapped non-interacting Fermi gas with Hamiltonian~\eqref{eq:trap-hamiltonian}, let us write $G = e^{-\beta H}$ and $\lambda = N h^d$. Consider the quantities
	\begin{subequations}
		\begin{align}
			\opg_\beta &= Z_\beta^{-1} \,e^{-\beta H},
			\\
			\op_\beta &= \lambda^{-1}\(\id + e^{\beta (H-\mu)}\)^{-1} = G_\mu \(\id+\lambda\, G_\mu\)^{-1},
		\end{align}
	\end{subequations}
	where $G_\mu = \lambda^{-1}\,e^{-\beta\(H-\mu\)} = Z_\mu^{-1} \,G$ and $Z_\beta$ and $Z_\mu$ are given in Equation~\eqref{eq:def_Z}. The quantum gradients of these operators are given by the following lemma.
	\begin{lem}
		Let $\Dh$ be a quantum gradient. Then we have
		\begin{subequations}
			\begin{align}\label{eq:gradient_gaussian}
				\Dh\opg_\beta &= -\frac{\beta}{Z_\beta} \int_0^1 G^{1-s} \(\Dh H\) G^s \dd s,
				\\\label{eq:gradient_fermi_dirac}
				\Dh\op_\beta &= -\beta \int_0^1 G_\mu^{1-s}(\id + \lambda G_\mu)^{-1} \(\Dh H\) (\id + \lambda G_\mu)^{-1} G_\mu^s\dd s.
			\end{align}
		\end{subequations}
	\end{lem}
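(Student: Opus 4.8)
The plan is to exploit the fact that a quantum gradient is an inner derivation: there is a fixed operator $A$ — equal to $\grad$ when $\Dh = \Dh_x$ and to $x/(i\hbar)$ when $\Dh = \Dh_\xi$ — such that $\Dh B = [A,B]$ for every operator $B$. Consequently $\Dh$ satisfies the Leibniz rule $\Dh(BC) = (\Dh B)\,C + B\,(\Dh C)$ and, on suitable domains, $\Dh(B^{-1}) = -B^{-1}\,(\Dh B)\,B^{-1}$ (this last follows from $\Dh\id = 0$ applied to $\id = BB^{-1}$). With this in hand the lemma reduces to a Duhamel identity for $\Dh e^{-\beta H}$ together with an elementary algebraic rearrangement for $\op_\beta$.

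First I would prove the Duhamel formula. Fix $\beta$ and differentiate the operator-valued path $t\mapsto e^{-(\beta-t)H}\,A\,e^{-tH}$ on $[0,\beta]$. Since $H$ commutes with $e^{-tH}$, one finds $\frac{d}{dt}\bigl(e^{-(\beta-t)H}A\,e^{-tH}\bigr) = e^{-(\beta-t)H}(HA-AH)\,e^{-tH} = -\,e^{-(\beta-t)H}(\Dh H)\,e^{-tH}$, and integrating from $0$ to $\beta$ — where the path takes the values $e^{-\beta H}A$ and $A\,e^{-\beta H}$ — followed by the rescaling $t = \beta s$ gives
\[
  \Dh e^{-\beta H} = [A,e^{-\beta H}] = -\beta\int_0^1 G^{1-s}\,(\Dh H)\,G^s\dd s .
\]
Dividing by the scalar $Z_\beta$ (recall $\opg_\beta = Z_\beta^{-1}e^{-\beta H}$) yields \eqref{eq:gradient_gaussian}. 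Repeating the computation with $H$ replaced by the scalar-shifted operator $H-\mu$ — legitimate because $[A,H-\mu]=[A,H]$ — and using $e^{-\beta(H-\mu)} = \lambda\,G_\mu$ together with $(\lambda G_\mu)^s = \lambda^s G_\mu^s$ gives $\Dh G_\mu = -\beta\int_0^1 G_\mu^{1-s}\,(\Dh H)\,G_\mu^s\dd s$.

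For \eqref{eq:gradient_fermi_dirac} I would start from the algebraic identity $\op_\beta = \lambda^{-1}\bigl(\id - (\id+\lambda G_\mu)^{-1}\bigr)$, which is immediate from $\op_\beta = G_\mu(\id+\lambda G_\mu)^{-1}$. Applying $\Dh$ and the inverse rule collapses the Leibniz terms into
\[
  \Dh\op_\beta = (\id+\lambda G_\mu)^{-1}\,(\Dh G_\mu)\,(\id+\lambda G_\mu)^{-1},
\]
and inserting the Duhamel expression for $\Dh G_\mu$ and then commuting the resolvent factor $(\id+\lambda G_\mu)^{-1}$ past $G_\mu^{1-s}$ and $G_\mu^s$ — allowed since both are functions of $H$ — rearranges this into \eqref{eq:gradient_fermi_dirac}.

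The one point requiring care, rather than any conceptual difficulty, is that $A$, $H$ and $\Dh H$ are all unbounded, so the differentiation of the path and the Leibniz and inverse rules must be run on a dense core stable under $e^{-tH}$ — the Schwartz space works for a confining potential such as the harmonic one of Theorem~\ref{thm:main-thm} — and one must check that $G^{1-s}(\Dh H)G^s$ and its Fermi–Dirac counterpart are genuinely defined and trace-class for $s\in(0,1)$, depend continuously on $s$, and remain integrable in trace norm up to the endpoints $s=0,1$. This is exactly where the smoothing of the heat semigroup $e^{-tH}$ enters: it maps $L^2$ into the domain of every polynomial in $x$ and $\opp$, hence into the domain of $\Dh H$, which for $H=\tfrac12\n{\opp}^2+V$ is the multiplication operator by $\grad V$ (for $\Dh_x$) or the momentum operator $\opp$ (for $\Dh_\xi$). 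Granting this, all the integrals converge and the manipulations above are rigorous.
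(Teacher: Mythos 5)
Your proposal is correct and follows essentially the same route as the paper: a Duhamel formula for $\Dh e^{-\beta H}$ obtained by differentiating an operator-valued path, combined with the rule $\Dh(B^{-1})=-B^{-1}(\Dh B)B^{-1}$ and the fact that $(\id+\lambda G_\mu)^{-1}$ commutes with powers of $G_\mu$. Your shortcut via $\op_\beta=\lambda^{-1}\bigl(\id-(\id+\lambda G_\mu)^{-1}\bigr)$, and your attention to domain issues for the unbounded operators, are only cosmetic refinements of the paper's argument.
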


	\begin{proof}
		Notice that for any operators $A$ and $B$, we have that 
		\begin{equation*}
			\dpt \com{A,e^{tB}} = \com{A,B\,e^{tB}} = \com{A,B}e^{tB} + B\, \com{A,e^{tB}}.
		\end{equation*}
		Therefore, $\dpt\(e^{-tB}\com{A,e^{tB}}\) = e^{-tB}\com{A,B}e^{tB}$ and we obtain the Duhamel-like formula
		\begin{equation}
			\com{A,e^{tB}} = \int_0^t e^{\(t-s\)B}\com{A,B}e^{sB}\dd s.
		\end{equation}
		In particular, taking $t=1$ and $B=-\beta H$, we obtain the identity
		\begin{equation*}
			\Dh G = -\beta \int_0^1 G^{1-s} \(\Dh H\) G^s\dd s
		\end{equation*}
		from which we deduce Identity~\eqref{eq:gradient_gaussian}. Then observe that if $A$ is an invertible operator, by the Leibniz rule for commutators, $0 = \Dh\!\(AA^{-1}\) = \(\Dh A\)A^{-1} + A\Dh\!\(A^{-1}\)$ and so
		\begin{equation}\label{eq:derivative_A_inverse}
			\Dh\!\(A^{-1}\) = - A^{-1} \(\Dh A\)A^{-1}.
		\end{equation}
		In particular, since $\Dh\(\id+\lambda G_\mu\) = \lambda \Dh G_\mu$, we deduce that
		\begin{align*}
			\Dh\op_\beta &= \(\Dh G_\mu\) \(\id + \lambda G_\mu\)^{-1} - G_\mu \(\id+\lambda G_\mu\)^{-1} \lambda \(\Dh G_\mu\) \(\id + \lambda G_\mu\)^{-1}
			\\
			&= \(\id + \lambda G_\mu\)^{-1} \(\Dh G_\mu\) \(\id + \lambda G_\mu\)^{-1}
		\end{align*}
		which leads to Formula~\eqref{eq:gradient_fermi_dirac}.
	\end{proof}
	Taking the square root of $G$ changes $\beta$ by $\beta/2$. Hence it follows that
	\begin{equation*}
		\Dh\sqrt{\opg_\beta} = -\frac{\beta}{2\sqrt{Z_\beta}} \int_0^1 G^{(1-s)/2} \(\Dh H\) G^{s/2} \dd s.
	\end{equation*}
	On the other hand, it is more difficult to compute explicitly $\Dh\sqrt{\op_\beta}$. The following lemma allows us to bound $\sqrt{\op_{\beta}}$ by reducing the problem to estimating $G$.
	
	\begin{lem}
		Let $\sfm$ be a self-adjoint operator. Then
		\begin{equation*}
			\Nrm{\Dh\sqrt{\op_\beta}}{\L^p(\sfm)} \leq \(\Nrm{\Dh \sqrt{G_\mu}}{\L^p(\sfm)} + \frac{\lambda}{2} \Nrm{\sqrt{\op_\beta}}{\L^q(\sfm)} \Nrm{\Dh G_\mu}{\L^r}\)
		\end{equation*}
		for any $1\le p, q, r\le \infty$ such that $\frac{1}{p}= \frac{1}{q} + \frac{1}{r}$.
	\end{lem}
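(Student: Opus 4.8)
The plan is to express $\sqrt{\op_\beta}$ in terms of $\sqrt{G_\mu}$ via a functional-calculus identity, then differentiate using the commutator Leibniz rule, and finally split into the two terms appearing in the claimed bound. First I would write $\op_\beta = G_\mu(\id + \lambda G_\mu)^{-1}$, so that $\sqrt{\op_\beta} = \sqrt{G_\mu}\,(\id + \lambda G_\mu)^{-1/2}$; this factorization is legitimate because $G_\mu$ and $(\id+\lambda G_\mu)^{-1}$ commute (they are functions of the same operator $H$), and the square root of a product of commuting positive operators is the product of the square roots. Applying the quantum gradient $\Dh$ (a derivation) to this product gives
\begin{equation*}
	\Dh\sqrt{\op_\beta} = \(\Dh\sqrt{G_\mu}\)(\id+\lambda G_\mu)^{-1/2} + \sqrt{G_\mu}\,\Dh\!\((\id+\lambda G_\mu)^{-1/2}\).
\end{equation*}

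For the second term I would use an integral representation of the negative square root, e.g. $A^{-1/2} = \frac{1}{\pi}\int_0^\infty (A+t)^{-1}\,t^{-1/2}\dd t$ with $A = \id+\lambda G_\mu$, together with Identity~\eqref{eq:derivative_A_inverse} for the derivative of a resolvent, to get
\begin{equation*}
	\Dh\!\((\id+\lambda G_\mu)^{-1/2}\) = -\frac{\lambda}{\pi}\int_0^\infty (\id+\lambda G_\mu + t)^{-1}\(\Dh G_\mu\)(\id+\lambda G_\mu+t)^{-1}\,t^{-1/2}\dd t.
\end{equation*}
The two outer resolvent factors, being bounded functions of $H$, commute with $\sqrt{G_\mu}$, so I can move one of them across to reassemble $\sqrt{G_\mu}(\id+\lambda G_\mu+t)^{-1} \preceq$ something controlled; more precisely, the operator $\sqrt{G_\mu}\,(\id+\lambda G_\mu+t)^{-1}(\id+\lambda G_\mu+t)^{-1}$ can be rewritten so that the $t$-integral collapses. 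The cleanest route is to observe that the function $g(x)=\frac{\sqrt{x}}{(1+\lambda x)^{1/2}}$ satisfies, by the chain rule for the same integral representation, an identity making $\sqrt{G_\mu}\,\Dh((\id+\lambda G_\mu)^{-1/2})$ equal to $-\tfrac{\lambda}{2}\sqrt{\op_\beta}\,(\id+\lambda G_\mu)^{-1}(\Dh G_\mu)$ up to commuting factors bounded by $\id$; then taking $\L^p(\sfm)$ norms, using that $(\id+\lambda G_\mu)^{-1}$ has operator norm $\le 1$ and commutes with $\sfm=1+\n{\opp}^n$ only on the symbolic level — here one must be a bit careful, since $\sfm$ does \emph{not} commute with $H$.

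The key technical step is therefore the application of the noncommutative Hölder inequality for the rescaled Schatten norms: $\Nrm{AB}{\L^p(\sfm)} \le \Nrm{A}{\L^q(\sfm)}\Nrm{B}{\L^r}$ whenever $\frac1p = \frac1q+\frac1r$, applied with $A$ a suitable square-root-type factor carrying the weight $\sfm$ and $B = \Dh G_\mu$. For the first term I simply bound $(\id+\lambda G_\mu)^{-1/2}$ in operator norm by $1$ and commute it past, giving $\Nrm{\(\Dh\sqrt{G_\mu}\)(\id+\lambda G_\mu)^{-1/2}}{\L^p(\sfm)} \le \Nrm{\Dh\sqrt{G_\mu}}{\L^p(\sfm)}$; the weight $\sfm$ is handled by noting it commutes with the resolvent's dependence structure up to terms that are absorbed, or more safely by placing $(\id+\lambda G_\mu)^{-1/2}$ on the side away from $\sfm$ in the trace. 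For the second term the factor-of-$\tfrac{\lambda}{2}$ comes from $\tfrac{d}{dx}x^{1/2} = \tfrac12 x^{-1/2}$ inside the chain rule, and the $\Nrm{\sqrt{\op_\beta}}{\L^q(\sfm)}\Nrm{\Dh G_\mu}{\L^r}$ structure falls out of Hölder.

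The main obstacle I anticipate is bookkeeping with the weight operator $\sfm$: unlike $G_\mu$ and its resolvents, $\sfm = 1+\n{\opp}^n$ does not commute with $H$, so one cannot freely slide resolvent factors past $\sfm$ inside the norms. The fix is to be disciplined about \emph{which side} of each product the weight sits on when taking the trace, using cyclicity of the trace and the fact that $\Nrm{\cdot}{\L^p}$ is unitarily invariant, and to push all bounded-by-$\id$ commuting factors to the weight-free side before applying Hölder — exactly the asymmetry already visible in the statement, where $\Dh G_\mu$ appears in $\L^r$ with no weight while $\sqrt{\op_\beta}$ carries the weight in $\L^q(\sfm)$. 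Once that is arranged, the remaining estimates are the triangle inequality and the single application of noncommutative Hölder described above.
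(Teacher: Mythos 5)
Your setup coincides with the paper's: the factorization $\sqrt{\op_\beta}=\sqrt{G_\mu}\,(\id+\lambda G_\mu)^{-1/2}$, the Leibniz rule, and the plan of moving the weight by self-adjointness so that the bounded factor $(\id+\lambda G_\mu)^{-1/2}$ ends up on the weight-free side (the clean way to say this is not ``cyclicity of the trace'' but that $\Dh\sqrt{\op_\beta}$ and $\sfm$ are self-adjoint, so $\Nrm{\Dh\sqrt{\op_\beta}\,\sfm}{\L^p}=\Nrm{\sfm\,\Dh\sqrt{\op_\beta}}{\L^p}$ by adjoint-invariance of Schatten norms, after which H\"older with an $\L^\infty$ factor of norm $\le 1$ applies). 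The genuine gap is in your treatment of the second term. You claim that in the representation $\Dh\big((\id+\lambda G_\mu)^{-1/2}\big)=-\tfrac{\lambda}{\pi}\int_0^\infty(\id+\lambda G_\mu+t)^{-1}\(\Dh G_\mu\)(\id+\lambda G_\mu+t)^{-1}t^{-1/2}\dd t$ the ``$t$-integral collapses'' to give $\sqrt{G_\mu}\,\Dh\big((\id+\lambda G_\mu)^{-1/2}\big)=-\tfrac{\lambda}{2}\sqrt{\op_\beta}\,(\id+\lambda G_\mu)^{-1}\(\Dh G_\mu\)$ up to commuting factors. This identity is false: the factor $\Dh G_\mu$ is \emph{not} a function of $H$ (if it commuted with all functions of $H$ the whole lemma would be immediate), so it cannot be slid past the resolvents $(\id+\lambda G_\mu+t)^{-1}$, and the two resolvents sandwiching it sit at genuinely different positions; the integral therefore does not reduce to $\int_0^\infty(A+t)^{-2}t^{-1/2}\dd t=\tfrac{\pi}{2}A^{-3/2}$. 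This is exactly the same noncommutativity issue you flag for $\sfm$, but it hits the core of the step that is supposed to produce the factor $\tfrac{\lambda}{2}$.

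The route can be made honest by leaving the resolvents where they are, applying H\"older inside the $t$-integral with $\Nrm{\sfm\sqrt{G_\mu}(\id+\lambda G_\mu+t)^{-1}}{\L^q}\le\Nrm{\sfm\sqrt{\op_\beta}}{\L^q}\,\sup_{a\ge1}\tfrac{\sqrt a}{a+t}$ and $\Nrm{(\id+\lambda G_\mu+t)^{-1}}{\L^\infty}\le(1+t)^{-1}$, and then integrating in $t$; this yields the correct structure $\Nrm{\sqrt{\op_\beta}}{\L^q(\sfm)}\Nrm{\Dh G_\mu}{\L^r}$ but with a constant strictly larger than $\tfrac12$ (roughly $\tfrac1\pi(\tfrac12+\tfrac\pi4+\tfrac{\ln 2}{2})\approx 0.52$), so the lemma as stated, with the constant $\tfrac{\lambda}{2}$, is not reached this way. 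The paper instead bounds $\Nrm{\Dh\sqrt{\id+\lambda G_\mu}}{\L^r}$ by expanding $\sqrt{\id+A}$, $A=\lambda G_\mu$, in a binomial series centered at the spectral midpoint $c=\tfrac{\lambda}{2}Z_\mu^{-1}$ (so that $\Nrm{A-c}{\L^\infty}\le c$), applying the Leibniz rule to each $(A-c)^n$, and resumming; the series identity $\sum_{n\ge1}n\,\big|\binom{1/2}{n}\big|\,(c+1)^{\frac12-n}c^{n-1}=\tfrac12$ gives exactly $\tfrac12\Nrm{\Dh A}{\L^r}=\tfrac{\lambda}{2}\Nrm{\Dh G_\mu}{\L^r}$, independently of $c$. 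You should either adopt that series argument or sharpen your resolvent estimates; as written, the key constant rests on an incorrect commutation.
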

	
	\begin{proof}
		By Identity \eqref{eq:derivative_A_inverse}, we have that 
		\begin{equation}\label{eq:gradient_sqrt}
		\begin{aligned}
			\Dh\sqrt{\op_\beta}
			= \(\Dh \sqrt{G_\mu} - \sqrt{\op_\beta} \(\Dh \sqrt{\id+\lambda G_\mu}\)\) \(\id+\lambda G_\mu\)^{-\frac{1}{2}}.
		\end{aligned}
		\end{equation}
		To bound $\Dh \sqrt{\id+\lambda G_\mu}$, we proceed as in \cite[Lemma~7.1]{chong_many-body_2021}. Let $A := \lambda G_\mu$.
		Since $0\leq A \leq 2\,c := \lambda Z_\mu^{-1}$, we deduce that $\Nrm{A-c}{\L^\infty} \leq c$ and so the following series is absolutely convergent 
		\begin{equation*}
			\sqrt{\id + A} = \sqrt{1+c}\,\sqrt{\id+\tfrac{1}{1+c}\(A-c\)} = \sum_{n=0}^\infty \binom{1/2}{n} \(\tfrac{1}{1+c}\)^{n-\frac{1}{2}} \(A-c\)^n.
		\end{equation*}
		Since it follows from the Jacobi identity that 
		\begin{equation*}
			\Dh\(A-c\)^n = \sum_{k=1}^n \(A-c\)^{k-1} \(\Dh A\) \(A-c\)^{n-k},
		\end{equation*}
		then we deduce the estimate
		\begin{align*}
			\Nrm{\Dh\sqrt{\id+A}}{\L^p} &\leq \sum_{n=1}^\infty \n{\binom{1/2}{n}} \(\tfrac{1}{c+1}\)^{n-\frac{1}{2}} n \Nrm{\Dh A}{\L^p} \Nrm{A-c}{\L^\infty}^{n-1}
			\\
			&\leq \frac{1}{2\sqrt{c+1}} \,\sum_{n=1}^\infty \binom{-1/2}{n-1} \(\tfrac{-c}{c+1}\)^{n-1} \Nrm{\Dh A}{\L^p} = \frac{1}{2} \Nrm{\Dh A}{\L^p}.
		\end{align*}
		We conclude by applying the fact that $\Nrm{\Dh\sqrt{\op_\beta} \,\sfm}{\L^p}= \Nrm{\sfm\,\Dh\sqrt{\op_\beta}}{\L^p}$, which follows from taking the adjoint, and the triangle inequality to Formula~\eqref{eq:gradient_sqrt}. This yields the desired result.
	\end{proof}
		
\subsection{Bounds on the inverse fugacity}\label{sect:bounds_on_inverse_fugacity}

	In the rest of this paper, we assume $H$ is the harmonic oscillator Hamiltonian $H = \frac{\n{\opp}^2 + \n{x}^2}{2}$. In this case of the partition function in $\opg_\beta$ has the closed form
	\begin{equation}
		Z_\beta =  \(\frac{2\pi}{\beta}\)^d \frac{1}{\operatorname{shc}(\frac{\beta\hbar}{2})^d}
	\end{equation}
	where $\operatorname{shc}(x) = \sh(x)/x$ denotes the hyperbolic sinc function. In particular $Z_\beta \leq (2\pi/\beta)^d$ and $Z_\beta \sim (2\pi/\beta)^d$ as $\beta\hbar\rightarrow 0$. The function $Z_\mu$, that one could call the inverse fugacity, can be compared to the partition function $Z_\beta$ as proved in the following proposition.

    \begin{prop}\label{prop:bounds_on_chemical_potential}
		Let $Z_\mu= \lambda\, e^{-\beta\mu}$ with $\lambda = Nh^d$. Then the following inequality holds
		\begin{equation}\label{eq:bound_cN}
			C_{\lambda,\beta}^{-1}\,Z_{\beta} \leq Z_\mu \leq Z_{\beta}
		\end{equation}
		with $C_{\lambda,\beta} = 2$ if $\mu \leq d\hbar/2$ and $C_{\lambda,\beta} = 1 + e^{\beta\lambda^{1/d}/2\pi}$ if $\mu \ge d\hbar/2$.
	\end{prop}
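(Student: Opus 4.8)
The plan is to prove the two bounds in~\eqref{eq:bound_cN} separately. The inequality $Z_\mu\le Z_\beta$ is the soft one: writing $\op_\beta=G_\mu(\id+\lambda G_\mu)^{-1}$ with $G_\mu$ and $(\id+\lambda G_\mu)^{-1}$ commuting functions of $H$ and $0\le(\id+\lambda G_\mu)^{-1}\le\id$, one gets the operator inequality $0\le\op_\beta\le G_\mu$. Taking traces, multiplying by $h^d$, and using the normalization $h^d\tr(\op_\beta)=1$ together with $h^d\tr(G_\mu)=Z_\mu^{-1}h^d\tr(e^{-\beta H})=Z_\beta/Z_\mu$ yields $1\le Z_\beta/Z_\mu$, i.e.\ $Z_\mu\le Z_\beta$.

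For the reverse bound I would first establish the \emph{unconditional} estimate $Z_\beta/Z_\mu\le 1+e^{\beta(\mu-d\hbar/2)}$. The key point is the opposite operator comparison: since everything in sight is a function of $H$, $G_\mu=\op_\beta(\id+\lambda G_\mu)\le\bigl(1+\lambda\Nrm{G_\mu}{\L^\infty}\bigr)\op_\beta$, and taking traces gives $Z_\beta/Z_\mu=h^d\tr(G_\mu)\le 1+\lambda\Nrm{G_\mu}{\L^\infty}$. Because $H$ is the harmonic oscillator, its lowest eigenvalue is $d\hbar/2$, so $\Nrm{G_\mu}{\L^\infty}=Z_\mu^{-1}e^{-\beta d\hbar/2}$, and using $\lambda Z_\mu^{-1}=e^{\beta\mu}$ this becomes $\lambda\Nrm{G_\mu}{\L^\infty}=e^{\beta(\mu-d\hbar/2)}$. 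When $\mu\le d\hbar/2$ the exponent is $\le 0$, hence $Z_\beta/Z_\mu\le 2$ and the proposition holds with $C_{\lambda,\beta}=2$.

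It remains to treat $\mu\ge d\hbar/2$, where one must bound the chemical potential from above; this is the step that actually uses the equation defining $\mu$. The constraint $h^d\tr(\op_\beta)=1$ is equivalent to $\sum_{n\in\N_0^d}\bigl(1+e^{\beta(E_n-\mu)}\bigr)^{-1}=N$ with $E_n=(\n{n}_1+d/2)\hbar$. Discarding the indices with $E_n>\mu$ and bounding each remaining summand below by $1/2$ gives $N\ge\tfrac12\,\#\{n\in\N_0^d:\n{n}_1\le\mu/\hbar-d/2\}$; bounding this lattice count below by the volume $(\mu/\hbar-d/2)^d/d!$ of the corresponding simplex forces $(\mu/\hbar-d/2)^d\lesssim N$, hence $\mu-d\hbar/2\lesssim\hbar N^{1/d}=\lambda^{1/d}/(2\pi)$, and plugging this into the unconditional bound gives an estimate of the claimed form $1+e^{\,c\,\beta\lambda^{1/d}}$. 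I expect the main obstacle to be precisely this last step: extracting the sharp constant $C_{\lambda,\beta}=1+e^{\beta\lambda^{1/d}/(2\pi)}$ from the Weyl-type eigenvalue count (and keeping track of the integer parts it produces) is the delicate part; for the general trap $H=\tfrac12\n{\opp}^2+V(x)$ mentioned in the remarks, this counting step would be replaced by a semiclassical lower bound on the volume of the sublevel set $\{H\le\mu\}$.
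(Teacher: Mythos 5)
Your argument is correct and follows essentially the same route as the paper: the upper bound $Z_\mu\le Z_\beta$ from $\op_\beta\le G_\mu$, the unconditional estimate $Z_\beta/Z_\mu\le 1+e^{\beta(\mu-d\hbar/2)}$ coming from the lowest eigenvalue $d\hbar/2$ (the paper's Inequality~\eqref{ineq:lower_for_c_N}), and, when $\mu\ge d\hbar/2$, an upper bound on $\mu$ obtained from the normalization by observing that each term with $E_n\le\mu$ contributes at least $1/2$ and then counting eigenvalues below $\mu$. The only divergence is in that count: your simplex estimate, that the number of $n\in\N_0^d$ with $\n{n}_1\le t$ is at least $t^d/d!$, produces a $d$-dependent factor in the exponent rather than the stated $1+e^{\beta\lambda^{1/d}/2\pi}$, but the paper's own count, which compares with the box count $\(\floor{t}+1\)^d$ via $\n{n}_\infty\le\n{n}_1$, has the inequality in the wrong direction (the simplex is contained in the box), so your more conservative constant is the one that is actually justified, and the precise value of $C_{\lambda,\beta}$ plays no quantitative role elsewhere in the paper.
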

	
	\begin{proof}[Proof of Proposition \ref{prop:bounds_on_chemical_potential}]
		Let $c_\mu = e^{\beta\mu} = N h^d/Z_\mu$. Notice that
		\begin{equation*}
			1 = h^d \Tr{\op_\beta} = \frac{1}{N} \Tr{\big(\id + c_\mu^{-1}e^{\beta H}\big)^{-1}} \leq \frac{c_\mu}{N} \Tr{e^{-\beta H}},
		\end{equation*}
		which implies the first inequality in Formula~\eqref{eq:bound_cN}.
		
		Next, define the function $g:\R_+\to\R_+$ by $g(r) = \big(1+c_\mu^{-1}\,e^{\beta r}\big)^{-1}$, then for any $R>0$, it holds $g(r) \geq \big(1+c_\mu^{-1}\,e^{\beta r}\big)^{-1} \,\indic_{r\leq R}$, which leads to
		\begin{equation*}
			1 = \frac{1}{N} \tr\(g(H)\) \geq \frac{1}{N\(1+c_\mu^{-1}\,e^{\beta R}\)}\, \tr\(\indic_{H\leq R}\).
		\end{equation*}
		Recalling the property of the harmonic oscillator given in Section~\ref{sec:harmonic_oscillator}, one sees that the trace of the characteristic function is nothing but
		\begin{equation*}
			\Tr{\indic_{H\leq R}} = \n{\Set{n\in\N^d_0 : \(\n{n}_1+\tfrac{d}{2}\)\hbar \leq R}}.
		\end{equation*}
		Since $\n{n}_1 \geq \sup_{j} n_j =: \n{n}_\infty$, then this can be crudely estimated by 
		\begin{equation}
			\Tr{\indic_{H\leq R}} \geq \n{\Set{n\in\N^d_0 : \n{n}_\infty \leq \tfrac{R}{\hbar}-\tfrac{d}{2}}} = {\(\floor{\tfrac{R}{\hbar}-\tfrac{d}{2}}+1\)^d}.
		\end{equation}
		In particular, since $\mu\ge d\hbar/2>0$, then taking $R = \mu = \frac{\ln c_\mu}{\beta}$ yields $\tfrac{\mu}{\hbar}-\frac{d}{2} \leq \big(N^{1/d}+1\big)$ and so since $N \geq 1$,
		\begin{equation}\label{eq:bound_mu}
			\mu \leq 2\,N^{1/d}\hbar + \frac{d\hbar}{2}.
		\end{equation}

		Now, let us obtain upper bounds for $c_\mu$. First, observe that we have the lower bounds  
		\begin{equation}\label{ineq:lower_for_c_N}
			1 = \frac{c_\mu}{N} \Tr{\big(\id + c_\mu e^{-\beta H}\big)^{-1}e^{-\beta H}} \geq \frac{Z_\beta\,c_\mu}{Nh^d\(1+c_\mu e^{-d\beta\hbar/2}\)}.
		\end{equation}
		If $c_\mu \le e^{d\beta\hbar/2}$ (i.e. $\mu\le d\hbar/2$) then it follows from Inequality \eqref{ineq:lower_for_c_N} that 
		\begin{align*}
			c_\mu \le 2\, N h^d Z_\beta^{-1}.
		\end{align*}
		On the other hand, if $c_\mu \ge e^{d\beta\hbar/2}>1$ (i.e. $\mu \ge d\hbar/2$), then it follows from Inequality \eqref{eq:bound_mu} and Inequality \eqref{ineq:lower_for_c_N} that we have the bound
		\begin{align}
				c_\mu \leq \(1+c_\mu\, e^{-d\beta\hbar/2}\) N h^d Z_\beta^{-1} \leq \(1+e^{2\beta\hbar N^{1/d}}\) N h^d Z_\beta^{-1} 
		\end{align}
		This completes our proof of the proposition.
	\end{proof}

	\subsection{\texorpdfstring{$\L^\infty$}{L-infty} bounds} This section is devoted to the proof of the following proposition.
	\begin{prop}\label{prop:regu_thermal_harmonic}
		Let $\beta >0$ and $H = \frac{\n{\opp}^2 + \n{x}^2}{2}$, then we have the estimates
		\begin{align*}
			\Nrm{\Dh\opg_\beta}{\L^\infty} &\leq \frac{2}{Z_\beta}\max\(\sqrt{\beta}, \beta\sqrt{\hbar}\),
			\\
			\Nrm{\Dh\op_\beta}{\L^\infty} &\leq \frac{2}{Z_\mu} \max\(\sqrt{\beta}, \beta\sqrt{\hbar}\).
		\end{align*}
	\end{prop}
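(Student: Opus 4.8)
The plan is to reduce both estimates, via the two Duhamel formulas of the preceding lemma, to a single operator-norm bound on $G^{1-s}\,(\Dh H)\,G^s$ for $s\in[0,1]$, where $G=e^{-\beta H}$ and $\Dh$ stands for either quantum gradient; recall that $\Nrm{\cdot}{\L^\infty}$ is simply the operator norm. First I would compute the quantum gradients of $H$ from the canonical commutation relations: since only the potential part contributes to the first and only the kinetic part to the second,
\begin{equation*}
	\Dhx H=\com{\grad,H}=x,\qquad \Dhv H=\com{\tfrac{x}{i\hbar},H}=\opp.
\end{equation*}
For $\opg_\beta=Z_\beta^{-1}G$, Identity~\eqref{eq:gradient_gaussian} and the triangle inequality give
\begin{equation*}
	\Nrm{\Dh\opg_\beta}{\L^\infty}\le\frac{\beta}{Z_\beta}\sup_{s\in[0,1]}\Nrm{G^{1-s}\,(\Dh H)\,G^s}{\L^\infty}.
\end{equation*}
For $\op_\beta$ one first observes that $(\id+\lambda G_\mu)^{-1}$, being a function of $H$, commutes with every power of $G_\mu=Z_\mu^{-1}G$, so that the integrand in~\eqref{eq:gradient_fermi_dirac} equals $(\id+\lambda G_\mu)^{-1}\bigl(G_\mu^{1-s}(\Dh H)G_\mu^{s}\bigr)(\id+\lambda G_\mu)^{-1}$; since $\Nrm{(\id+\lambda G_\mu)^{-1}}{\L^\infty}\le1$ and $G_\mu^{1-s}(\Dh H)G_\mu^{s}=Z_\mu^{-1}\,G^{1-s}(\Dh H)G^s$, the same bound holds with $Z_\beta$ replaced by $Z_\mu$.

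It then remains to prove $\Nrm{G^{1-s}(\Dh H)G^s}{\L^\infty}\le\max\bigl(\sqrt{2\hbar},\beta^{-1/2}\bigr)$ uniformly in $s$. Here I would introduce the ladder operators $a_j=(2\hbar)^{-1/2}(x_j+i\opp_j)$ of the oscillator, for which $H=\sum_{j}\hbar(a_j^\ast a_j+\tfrac12)$, $x_j=\sqrt{\hbar/2}\,(a_j+a_j^\ast)$ and $\opp_j=-i\sqrt{\hbar/2}\,(a_j-a_j^\ast)$. From $\com{H,a_j}=-\hbar a_j$ one obtains the transport identities $a_j\,e^{-\tau H}=e^{-\hbar\tau}\,e^{-\tau H}a_j$ and $e^{-\tau H}a_j^\ast=e^{-\hbar\tau}\,a_j^\ast e^{-\tau H}$, hence $G^{1-s}a_jG^s=e^{-\hbar\beta s}\,G a_j$ and $G^{1-s}a_j^\ast G^s=e^{-\hbar\beta(1-s)}\,a_j^\ast G$. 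Since $\Nrm{a_j^\ast G}{\L^\infty}=\Nrm{G a_j}{\L^\infty}$ and the scalar prefactors are $\le1$, in either case $\Nrm{G^{1-s}(\Dh H)G^s}{\L^\infty}\le\sqrt{2\hbar}\,\Nrm{G a_j}{\L^\infty}$. Finally, $a_ja_j^\ast=a_j^\ast a_j+1$ and $\com{a_j^\ast a_j,G}=0$ give $\Nrm{G a_j}{\L^\infty}^2=\Nrm{G^2(a_j^\ast a_j+1)}{\L^\infty}$; factorizing $G^2=\prod_k e^{-2\beta H_k}$, bounding the factors with $k\neq j$ by $1$, and using $a_j^\ast a_j+1=H_j/\hbar+\tfrac12$, this reduces to the scalar supremum $\sup_{n\ge0}(n+1)\,e^{-(2n+1)\beta\hbar}$, which is bounded, by an elementary one-variable optimization, by $\max\bigl(1,(2\beta\hbar)^{-1}\bigr)$ (the maximizer lies near $n=(2\beta\hbar)^{-1}$ when $\beta\hbar<\tfrac12$, and at $n=0$ otherwise). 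Assembling the pieces and crudely estimating $\max(\sqrt2\,\beta\sqrt\hbar,\sqrt\beta)\le2\max(\sqrt\beta,\beta\sqrt\hbar)$ yields the two claimed inequalities.

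The argument is largely mechanical; the step most likely to require care is the passage from the Duhamel integrand for $\op_\beta$ to a tractable expression, i.e. recognizing that the resolvent-type factors $(\id+\lambda G_\mu)^{-1}$ commute with $G_\mu$ and may thus be pulled outside with norm $1$. One should also keep in mind the elementary bounds $G\le\id$ and $H\ge d\hbar/2$, which guarantee that all the operator norms in sight are finite. I would add the remark that the same scheme adapts to a general confining potential $V\ge0$: one still has $\Dhv H=\opp$ with $\n{\opp}^2\le2H$, and $\Dhx H=\grad V$; writing $\Nrm{(\Dh H)e^{-\tau H}}{\L^\infty}^2=\Nrm{e^{-\tau H}\n{\Dh H}^2e^{-\tau H}}{\L^\infty}$ and using an operator inequality of the form $\n{\Dh H}^2\le C(1+H^k)$ reduces matters to the scalar estimate $\Nrm{(1+H^k)e^{-2\tau H}}{\L^\infty}<\infty$, so that explicit eigenfunctions are never needed — at the cost of a $\tau^{-1/2}$-type singularity near $s=0$, which is nonetheless integrable on $[0,1]$.
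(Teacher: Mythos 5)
Your proposal is correct, but the heart of it is genuinely different from the paper's proof. Both arguments begin identically: compute $\Dhx{H}=x$, $\Dhv{H}=\opp$, use the Duhamel formulas, and pull the resolvents $(\id+\lambda G_\mu)^{-1}$ (which commute with $G_\mu$ and have operator norm at most one) out of the integrand, so that only $Z_\mu^{-1}$ or $Z_\beta^{-1}$ times $G^{1-s}(\Dh H)G^s$ remains — this is exactly how the paper reaches Inequality~\eqref{eq:split_integral}. The divergence is in the key estimate: the paper splits the $s$-integral at $s=\tfrac12$ and invokes Lemma~\ref{lem:weights_vs_gaussian}, proved by a differential-inequality argument on $y(t)=\Nrm{\n{x}^n e^{-tH}\varphi}{L^2}^{2/n}$, whereas you exploit the ladder-operator algebra, $a_j G^s=e^{-s\beta\hbar}G^s a_j$, to shift the semigroup exactly and get a bound uniform in $s\in[0,1]$ (no splitting), reducing everything to the scalar optimization $\sup_{n\ge 0}(n+1)e^{-(2n+1)\beta\hbar}\le\max\(1,(2\beta\hbar)^{-1}\)$; your identities and constants check out and assemble to the stated bounds. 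What each buys: your route is shorter, fully explicit and avoids any analysis of the heat flow, but it is tied to the exact oscillator algebra, while the paper's lemma handles arbitrary weights $\n{x}^n$ (which the paper reuses, via interpolation, for the $\L^p$ bounds) and is the basis of its remark that the $p=\infty$ argument adapts to general trapping potentials — which your scheme only recovers through the separate sketch at the end. One small caveat: your estimate is per component $x_j$, $\opp_j$, whereas the paper's Lemma~\ref{lem:weights_vs_gaussian} controls the full moduli $\n{x}$, $\n{\opp}$ with a dimension-free constant; since the paper does not spell out the norm convention for the operator-valued vector $\Dh\op_\beta$, you should either state explicitly that you bound each component, or redo the last step with $\n{x}^2=\tfrac{\hbar}{2}\sum_j\(a_j+a_j^*\)^2$, accepting at worst a dimensional constant that the statement as written does not display.
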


	Prior to giving the proof of the above proposition, let us make the following observation. Since $\Dhx{H} = x$ and $\Dhv{H} = \opp$, we deduce that $\Dh_{z} H =\opz := (x,\opp)$ where $z=(x, \xi)$. By Identity \eqref{eq:gradient_gaussian}, we have that
	\begin{equation*}
		\Dh\op_\beta = -\beta \int_0^1 (\id + \lambda G_\mu)^{-1}G_\mu^{1-s} \,\opz\, G_\mu^s(\id + \lambda G_\mu)^{-1}\dd s
	\end{equation*}
	which implies the estimate 
	\begin{equation*}
		\Nrm{\Dh\op_\beta}{\L^p} \leq \beta\,\int_0^1 \Nrm{G_\mu^{1-s} \opz\, G_\mu^s}{\L^p} \d s = \beta\, Z_{\mu}^{-1}\,\int_0^1 \Nrm{G^{1-s} \opz\, G^s}{\L^p} \d s.
	\end{equation*}
	Since $\Nrm{G}{\L^\infty}\leq 1$, then we have the estimate
	\begin{equation}\label{eq:split_integral}
		\begin{aligned}
			\Nrm{\Dh\op_\beta}{\L^p} &\leq \beta\, Z_\mu^{-1}\int_0^{1/2}  \Nrm{G^{1-s} \opz}{\L^p} \d s + \beta\, Z_\mu^{-1}\int_{1/2}^{1}  \Nrm{\opz \,G^{s}}{\L^p} \d s
			\\
			&\leq 2\,\beta\, Z_\mu^{-1} \int_{1/2}^1  \Nrm{\opz\,G^{s}}{\L^p} \d s.
		\end{aligned}
	\end{equation}
    Hence to estimate $\Dh\op_\beta$ and $\Dh\opg_\beta$, it remains to estimate the value of $\Nrm{x\,e^{-\beta s H}}{\L^p}$ and $\Nrm{\opp\,e^{-\beta s H}}{\L^p}$ for $s\in[\frac12,1]$. Let start with the case $p=\infty$.
	\begin{lem}\label{lem:weights_vs_gaussian}
		Let $\beta>0$, then we have the estimate
		\begin{align*}
			\Nrm{\n{x}^n e^{-\beta H}}{\L^\infty}^{2/n} = \Nrm{\n{\opp}^n e^{-\beta H}}{\L^\infty}^{2/n} \leq n\max\(\tfrac{2}{\beta}, \sqrt{2}\,\hbar\).
		\end{align*}
	\end{lem}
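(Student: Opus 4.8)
The plan is to control the operator norm $\Nrm{\n{x}^n e^{-\beta H}}{\L^\infty}$ by a pointwise supremum of a Gaussian, via the explicit factorization of the harmonic--oscillator heat kernel, and then to deduce the stated bound by an elementary inequality. First, the equality of the two norms follows from a symmetry: the $\hbar$--rescaled Fourier transform $\cF_\hbar$ is unitary on $L^2(\Rd)$, exchanges (up to sign) multiplication by $x$ with the operator $\opp$, and hence fixes $H = \tfrac12\bigl(\n{\opp}^2+\n{x}^2\bigr)$, so that $\cF_\hbar\,\n{x}^n e^{-\beta H}\,\cF_\hbar^{-1} = \n{\opp}^n e^{-\beta H}$; as $\cF_\hbar$ is unitary the two norms in the statement agree, and it suffices to bound the operator norm of $T := \n{x}^n e^{-\beta H}$.

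Next, by Mehler's formula the kernel of $e^{-\beta H}$ is a real symmetric Gaussian (so $T$ is bounded), and by the semigroup property the kernel of $TT^\ast$ is $\n{x}^n\,K_{2\beta}(x,y)\,\n{y}^n$, where $K_{2\beta}$ is the kernel of $e^{-2\beta H}$. Completing the square in $K_{2\beta}$ and using $\coth(2\beta\hbar)-\tfrac1{\sh(2\beta\hbar)} = \th(\beta\hbar)$, one gets
\[ K_{2\beta}(x,y) = g(x)\,\mathcal G(x-y)\,g(y), \qquad g(x) := e^{-\frac{\th(\beta\hbar)}{2\hbar}\n{x}^2}, \]
in which $\mathcal G$ turns out to be exactly the centered Gaussian probability density on $\Rd$ with covariance $\hbar\,\sh(2\beta\hbar)\,\Id$ (the Mehler prefactor is precisely the normalization constant of $\mathcal G$). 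Passing to $e^{-2\beta H}$ rather than to $e^{-\beta H}$ is what yields the sharp Gaussian weight $\th(\beta\hbar)$ in $g$.

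Writing $M_f$ for multiplication by $f$ and $C_{\mathcal G}$ for convolution against $\mathcal G$, the previous display gives $TT^\ast = M_{\n{\cdot}^n g}\,C_{\mathcal G}\,M_{\n{\cdot}^n g}$, and $\Nrm{C_{\mathcal G}}{\L^\infty} = \Nrm{\mathcal G}{L^1} = 1$ by Young's inequality since $\mathcal G$ is a probability density. Hence by submultiplicativity and $\sup_{r\ge 0} r^n e^{-ar^2} = (n/(2ea))^{n/2}$,
\[ \Nrm{T}{\L^\infty}^2 = \Nrm{TT^\ast}{\L^\infty} \le \Nrm{\,\n{\cdot}^n g\,}{L^\infty}^2 = \Bigl(\sup_{r\ge 0} r^n e^{-\frac{\th(\beta\hbar)}{2\hbar}r^2}\Bigr)^2 = \Bigl(\frac{n\hbar}{e\,\th(\beta\hbar)}\Bigr)^n, \]
so $\Nrm{T}{\L^\infty}^{2/n}\le \dfrac{n\hbar}{e\,\th(\beta\hbar)}$. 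It remains to check $\dfrac{\hbar}{e\,\th(\beta\hbar)}\le\max\bigl(\tfrac2\beta,\sqrt2\,\hbar\bigr)$, i.e., with $t=\beta\hbar$, that $\varphi(t):=t/\th(t)\le\max(2e,\sqrt2\,e\,t)$ for every $t>0$: the function $\varphi$ is increasing (as $\varphi'(t)=\bigl(\tfrac12\sh(2t)-t\bigr)/\sh^2(t)>0$), so for $t\le\sqrt2$ one has $\varphi(t)\le\varphi(\sqrt2)<2e$, while for $t\ge\sqrt2$ one has $\sqrt2\,e\,t\ge 2e$ and $\varphi(t)\le\sqrt2\,e\,t\iff\th(t)\ge\tfrac1{\sqrt2\,e}$, which holds since $\th(\sqrt2)>\tfrac1{\sqrt2\,e}$. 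Together with the symmetry this gives the claimed estimate for both $\n{x}^n$ and $\n{\opp}^n$.

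The proof has no real analytic difficulty: the only genuinely useful idea is the factorization of the heat kernel as a Gaussian multiplication composed with a Gaussian convolution (which is a contraction) composed with a Gaussian multiplication, which turns the operator-norm estimate into the one-variable supremum above. The remaining points are routine: the $d$--dimensional bookkeeping in Mehler's formula, the (important) choice to square $T$ first so as to work with $e^{-2\beta H}$ and get the sharp constant $\th(\beta\hbar)$ rather than $\th(\beta\hbar/2)$, and the short calculus verification of the inequality for $\varphi$.
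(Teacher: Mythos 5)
Your proof is correct, but it takes a genuinely different route from the paper's. The paper works without explicit kernels: it sets $\psi=e^{-tH}\varphi$, tracks $y(t)=\Nrm{\n{x}^n\psi}{L^2}^{2/n}$, integrates by parts against $2\partial_t\psi=-(\n{x}^2-\hbar^2\Delta)\psi$, and uses Young's inequality to obtain a Riccati-type differential inequality $y'\le-\tfrac{1}{nc}\bigl(y^2-(n\hbar c)^2\bigr)$ with $c=\Nrm{\varphi}{L^2}^{2/n}$, whose analysis at $t=\beta$ gives the stated $n\max(2/\beta,\sqrt2\,\hbar)$ bound; the momentum weight is then handled by the $x\leftrightarrow\opp$ symmetry of $H$, exactly as in your Fourier-conjugation step. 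You instead exploit the exact solvability of the harmonic oscillator: Mehler's formula, the identity $\Nrm{T}{\L^\infty}^2=\Nrm{TT^\ast}{\L^\infty}$ with $TT^\ast=\n{x}^n e^{-2\beta H}\n{x}^n$, and the factorization of the kernel of $e^{-2\beta H}$ as a Gaussian multiplication composed with a Gaussian convolution (an $L^2$ contraction, since the kernel is a probability density) composed with a Gaussian multiplication, which reduces the estimate to $\sup_{r\ge0}r^n e^{-\th(\beta\hbar)r^2/(2\hbar)}$; the identity $\coth(2t)-1/\sh(2t)=\th(t)$, the normalization of the convolution kernel, the $d$-dimensional bookkeeping, and the closing calculus inequality $t/\th(t)\le\max(2e,\sqrt2\,e\,t)$ all check out. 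Your route buys a cleaner and somewhat sharper unified constant $n\hbar/\bigl(e\,\th(\beta\hbar)\bigr)$, which indeed implies the stated bound; what the paper's energy/ODE argument buys is robustness, since it uses no explicit heat kernel and therefore, as the paper remarks, adapts to other trapping potentials $V(x)$, whereas your argument is tied to the harmonic oscillator both through Mehler's formula and through the Fourier symmetry used for the $\n{\opp}^n$ case.
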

	
	\begin{remark}
		In the classical case, it is not difficult to prove that the maximum of the function $x\mapsto \n{x}^n e^{-\beta \n{x}^2}$ is $\(\frac{n}{2e\beta}\)^{n/2}$, and more generally,
		\begin{equation*}
			\Nrm{\n{x}^n e^{-\beta \n{x}^2}}{L^p}^p = \omega_d\,\Gamma(\tfrac{d+np}{2}) \(\beta p\)^{-\frac{d+np}{2}}
		\end{equation*}
		where $\omega_d$ is the volume of the $d$-dimensional unit ball and $\Gamma$ is the gamma function.
	\end{remark}
	
	\begin{proof}[Proof of Lemma~\ref{lem:weights_vs_gaussian}]
		It is sufficient to prove a bound on the first quantity because $H$ is symmetric in $x$ and $\opp$. Let $\varphi\in L^2$ and $\psi = e^{-t H} \varphi$. Notice that for any $t\geq 0$, $\Nrm{\psi}{L^2} \leq \Nrm{\varphi}{L^2}$. Let $y := \Nrm{\n{x}^{n} \psi}{L^2}^{2/n}$. Since $2\,\dpt\psi = - \big(\n{x}^2-\hbar^2\Delta\big)\psi$, then integrating by parts yields
		\begin{multline*}
			\dt y^n = - \intd\n{\psi}^2 \n{x}^{2\(n+1\)} \d x - \hbar^2 \Re{\intd \nabla\(\conj{\psi} \n{x}^{2n}\) \cdot\nabla\psi\dd x}
			\\
			\leq - \intd \n{\psi}^2 \n{x}^{2\(n+1\)} + \hbar^2 \n{\nabla\psi}^2 \n{x}^{2n} \d x + 2\,n\,\hbar^2  \intd \n{\psi} \n{x}^{2n-1} \n{\nabla\psi}\dd x.
		\end{multline*}
		Applying Young's inequality for the product, we get
		\begin{align*}
			\dt y^n &\leq - \intd \n{\psi}^2 \n{x}^{2(n+1)} \d x + \(n\hbar\)^2 \intd\n{\psi}^2 \n{x}^{2\(n-1\)}\d x
			\\
			&\leq - c^{-1}\,y^{n+1} + \(n\hbar\)^2  c\, y^{n-1}
		\end{align*}
		where $c = \Nrm{\varphi}{L^2}^{2/n}$. This yields the differential inequality
		\begin{equation}
			 y' \leq  -\frac{1}{n\,c}\(y^{2} - \(n\hbar\,c\)^2\).
		\end{equation}
		This ordinary differential equation has a fixed point at $y = n\hbar\,c$. If initially, $y \leq \hbar\,c$, then $y'\geq 0$ but $y$ remains smaller than $n\hbar\, c$. If not, then at any time $y'< 0$ and $\hbar\,c < y(t) < y(0)$. If initially $y > \sqrt{2}\,n\,\hbar \,c$, then as long as it remains true, it holds
		\begin{equation*}
			y' \leq -\frac{1}{2\,n\,c}\, y^2
		\end{equation*}
		which implies
		\begin{equation*}
			y(t) \leq \max\(\frac{2}{t}, \sqrt{2}\,\hbar\) n\, c
		\end{equation*}
		and proves the result by taking $t=\beta$.
	\end{proof}
	
	To complete the proof of Proposition~\ref{prop:regu_thermal_harmonic}, we use the above lemma to get that for any $s\in[\tfrac12,1]$,
	\begin{equation*}
		\Nrm{x\,e^{-\beta s H}}{\L^\infty} \leq 2\max\(\tfrac{1}{\sqrt{\beta}}, \sqrt{\hbar}\).
	\end{equation*}
	Finally, we conclude using Inequality~\eqref{eq:split_integral}.
	
	\subsection{\texorpdfstring{$\L^p$}{Lp} bounds for \texorpdfstring{$2\le p<\infty$}{2<= p<infty}}
	\begin{prop}\label{prop:Lp_regu_thermal_harmonic}
		Let $\beta> 0$ and $\hbar \in (0,1)$. Suppose $p \in [2, \infty]$ then there exists $C_{d,p} > 0$ such that
		\begin{align*}
			\Nrm{\Dh\opg_\beta}{\L^p} & C_{d,p}\, \frac{\beta^{\frac{1}{2}-\frac{d}{p}}}{Z_\beta}  \frac{\max\(2\sqrt{2},\beta\hbar\)^{\frac{1}{2}-\frac{1}{p}}}{\theta(\beta\hbar)^\frac{1}{p}},
			\\
			\Nrm{\Dh\op_\beta}{\L^p} &\leq C_{d,p}\, \frac{\beta^{\frac{1}{2}-\frac{d}{p}}}{Z_\mu}  \frac{\max\(2\sqrt{2},\beta\hbar\)^{\frac{1}{2}-\frac{1}{p}}}{\theta(\beta\hbar)^\frac{1}{p}}.
		\end{align*}
		where where $\theta(x) = \th(x)/x$ with $\th(x) = \tfrac{e^x-e^{-x}}{e^x+e^{-x}}$.
	\end{prop}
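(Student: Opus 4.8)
The plan is to reduce the statement to an estimate of $\Nrm{\opz\,G^s}{\L^p}$ for $s\in[\tfrac12,1]$, where $G=e^{-\beta H}$, $\opz=(x,\opp)=\Dh_z H$, and then interpolate between $p=2$ and $p=\infty$. Arguing exactly as in the derivation of Inequality~\eqref{eq:split_integral} --- that is, using the Duhamel formulas~\eqref{eq:gradient_gaussian}--\eqref{eq:gradient_fermi_dirac}, the identity $\Dh_z H=\opz$, the bound $\Nrm{G^s}{\L^\infty}\le1$, and, in the Fermi--Dirac case, $\Nrm{(\id+\lambda G_\mu)^{-1}}{\L^\infty}\le1$ together with $G_\mu=Z_\mu^{-1}G$ --- one gets
\begin{equation*}
	\Nrm{\Dh\opg_\beta}{\L^p}\le\frac{2\beta}{Z_\beta}\int_{1/2}^1\Nrm{\opz\,G^s}{\L^p}\dd s\qquad\text{and}\qquad\Nrm{\Dh\op_\beta}{\L^p}\le\frac{2\beta}{Z_\mu}\int_{1/2}^1\Nrm{\opz\,G^s}{\L^p}\dd s.
\end{equation*}
It therefore suffices to prove that $\Nrm{\opz\,G^s}{\L^p}\le C_{d,p}\,\beta^{-1/2-d/p}\,\max(2\sqrt2,\beta\hbar)^{1/2-1/p}\,\theta(\beta\hbar)^{-1/p}$ uniformly for $s\in[\tfrac12,1]$.

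For $2\le p<\infty$ I would interpolate using the elementary operator inequality $\n{T}^p\le\Nrm{T}{\L^\infty}^{p-2}\,\n{T}^2$, valid since $p\ge2$, which after taking traces gives $\Nrm{\opz\,G^s}{\L^p}^p\le\Nrm{\opz\,G^s}{\L^\infty}^{p-2}\,\Nrm{\opz\,G^s}{\L^2}^2$. The $\L^\infty$ factor is already available: from Lemma~\ref{lem:weights_vs_gaussian} applied with $n=1$ and $\beta$ replaced by $s\beta$, together with $\opz^*\opz=\n{x}^2+\n{\opp}^2$ and the commutation of $H$ with $G$, one obtains $\Nrm{\opz\,G^s}{\L^\infty}^2\le2\max(\tfrac{2}{s\beta},\sqrt2\,\hbar)\le\tfrac{2\sqrt2}{\beta}\max(2\sqrt2,\beta\hbar)$ for $s\in[\tfrac12,1]$.

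The one genuinely new ingredient is the $\L^2$ bound, which I would compute exactly. Since $\opz^*\opz=2H$ and $H$ commutes with $G$, cyclicity of the trace gives $\Nrm{\opz\,G^s}{\L^2}^2=2\,h^d\Tr{H\,e^{-2\beta sH}}=-2\,Z'(2\beta s)$, where $Z(t):=h^d\Tr{e^{-tH}}=(\pi\hbar)^d\,\sh(t\hbar/2)^{-d}$ is the closed form of the harmonic-oscillator partition function recalled before Proposition~\ref{prop:bounds_on_chemical_potential} (so that $Z(\beta)=Z_\beta$). Differentiating, $-Z'(t)=\tfrac{d\hbar}{2}\,\tfrac{\ch(t\hbar/2)}{\sh(t\hbar/2)}\,Z(t)$; since $t\mapsto\Tr{H e^{-tH}}$ is nonincreasing and $2\beta s\ge\beta$ on $[\tfrac12,1]$, the worst value of $s$ is $\tfrac12$, and using the identity $\tfrac{\ch(x)}{\sh(x)}=\tfrac1{x\,\theta(x)}$ at $x=\beta\hbar/2$ one gets
\begin{equation*}
	\Nrm{\opz\,G^s}{\L^2}^2\le d\hbar\,\frac{\ch(\beta\hbar/2)}{\sh(\beta\hbar/2)}\,Z_\beta=\frac{2d}{\beta\,\theta(\beta\hbar/2)}\,Z_\beta\le\frac{2d}{\beta\,\theta(\beta\hbar)}\,Z_\beta,
\end{equation*}
the last step because $\theta$ is decreasing on $\R_+$.

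Inserting the $\L^\infty$ and $\L^2$ bounds into the interpolation inequality, taking the $p$-th root (the powers of $\beta$ collapse to $\beta^{-1/2}$) and using $Z_\beta\le(2\pi/\beta)^d$ to bound $Z_\beta^{1/p}\le(2\pi)^{d/p}\beta^{-d/p}$, yields the claimed estimate on $\Nrm{\opz\,G^s}{\L^p}$; plugging this back into the two reductions of the first paragraph (the $s$-integral being over an interval of length $\tfrac12$) gives the proposition for $2\le p<\infty$. The endpoint $p=\infty$ is exactly Proposition~\ref{prop:regu_thermal_harmonic}, after noting $\max(\sqrt\beta,\beta\sqrt\hbar)=\sqrt\beta\,\max(1,\beta\hbar)^{1/2}\le\sqrt\beta\,\max(2\sqrt2,\beta\hbar)^{1/2}$ and $\theta(\beta\hbar)^0=1$. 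The only place requiring new work is the exact evaluation of $\Nrm{\opz\,G^s}{\L^2}^2$ through $Z'$ and the rewriting of $\ch/\sh$ at $\beta\hbar/2$ in terms of $\theta$ (plus the remark that the supremum over $s\in[\tfrac12,1]$ is reached at $s=\tfrac12$); everything else is routine Schatten interpolation bookkeeping and reuse of Inequality~\eqref{eq:split_integral} and Lemma~\ref{lem:weights_vs_gaussian}.
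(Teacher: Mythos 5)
Your proof is correct, and its skeleton is the one the paper uses: reduce via the Duhamel splitting \eqref{eq:split_integral} (and its analogue for $\opg_\beta$ coming from \eqref{eq:gradient_gaussian}) to a bound on $\Nrm{\opz\,G^s}{\L^p}$ uniform in $s\in[\tfrac12,1]$, then interpolate between the exact $\L^2$ quantity and the $\L^\infty$ bound of Lemma~\ref{lem:weights_vs_gaussian}; your bookkeeping (the collapse to $\beta^{-1/2}$, the use of $Z_\beta\le(2\pi/\beta)^d$, the $p=\infty$ endpoint via Proposition~\ref{prop:regu_thermal_harmonic}) matches the paper's. Where you genuinely diverge is the $\L^2$ input: the paper obtains $\Nrm{\n{x}^n G}{\L^2}^2 = C_{d,2n}Z_{2\beta}\big(\beta\,\theta(\beta\hbar)\big)^{-n}$ from Lemma~\ref{lem:moments_of_thermal_state}, i.e.\ from the explicit Wigner transform of the thermal state (Laguerre generating function), valid for all moments $n>-d$, whereas you only need the second moment and get it by the thermodynamic identity $\Nrm{\opz\,G^s}{\L^2}^2=2\,h^d\Tr{H\,e^{-2\beta sH}}=-2Z'(2\beta s)$ together with the closed form $Z(t)=(\pi\hbar)^d\sh(t\hbar/2)^{-d}$, the identity $\ch(x)/\sh(x)=1/(x\,\theta(x))$, monotonicity of $t\mapsto\Tr{He^{-tH}}$, and monotonicity of $\theta$; a quick check at $s=1$ shows your value agrees with the paper's ($C_{d,2}=d/2$, counting both the $x$ and $\opp$ components). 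Your route is more elementary and self-contained for this proposition, avoiding the Wigner/Laguerre computation entirely; the paper's moment lemma buys more generality (all weights $\n{x}^n$, $\n{\opp}^n$), which is what one would want if the weighted norms $\cW^{1,p}(\sfm)$ with $\sfm=1+\n{\opp}^n$ are to be controlled, but for the two displayed inequalities your argument suffices.
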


	\begin{lem}[Moments of the thermal state]\label{lem:moments_of_thermal_state}
		Let $\beta>0$ and $n>-d$. Then
		\begin{align*}
			h^d\Tr{\n{x}^n\opg_\beta} = h^d\Tr{\n{\opp}^n\opg_\beta} = \frac{C_{d,n}}{\(\beta\,\theta(\beta\hbar/2)/2\)^{n/2}}
		\end{align*}
		where $C_{d,n} = \Gamma(\tfrac{d+n}{2})/\Gamma(\tfrac{d}{2})$.
	\end{lem}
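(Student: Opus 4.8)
The plan is to realize $e^{-\beta H}$ as an explicit Gaussian operator via Mehler's formula, read off its diagonal kernel, and integrate it against $\n{x}^n$.

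First I would recall that for the one-dimensional harmonic oscillator $H_i=\tfrac12(\opp_i^2+x_i^2)$ with $\opp_i=-i\hbar\partial_{x_i}$, the operator $e^{-\beta H_i}$ has integral kernel
\begin{equation*}
  e^{-\beta H_i}(x,y)=\frac{1}{\sqrt{2\pi\hbar\,\sh(\beta\hbar)}}\,\exp\!\Big(-\tfrac{(x^2+y^2)\ch(\beta\hbar)-2xy}{2\hbar\,\sh(\beta\hbar)}\Big),
\end{equation*}
which one checks directly (the right-hand side solves $\partial_\beta K=-H_iK$ with $K\to\delta$ as $\beta\to0$, or it follows from the Hermite generating function underlying Section~\ref{sec:harmonic_oscillator}). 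Since $H=\sum_{i=1}^d H_i$ acts as a tensor product, the kernel of $e^{-\beta H}$ on $L^2(\Rd)$ is a product of $d$ such factors, and on the diagonal $x=y$ the identities $\ch t-1=2\sh^2(t/2)$ and $\sh t=2\sh(t/2)\ch(t/2)$ reduce the exponent to $\tfrac{\th(\beta\hbar/2)}{\hbar}\n{x}^2$, so that
\begin{equation*}
  e^{-\beta H}(x,x)=\Big(\tfrac{1}{2\pi\hbar\,\sh(\beta\hbar)}\Big)^{d/2}\exp\!\Big(-\tfrac{\th(\beta\hbar/2)}{\hbar}\,\n{x}^2\Big).
\end{equation*}

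Next I would use that $Z_\beta=h^d\Tr{e^{-\beta H}}$, so $h^d\Tr{\n{x}^n\opg_\beta}=\Tr{\n{x}^n e^{-\beta H}}/\Tr{e^{-\beta H}}$, and by the previous display both traces equal the same $\n{x}$-independent prefactor $(2\pi\hbar\,\sh(\beta\hbar))^{-d/2}$ times a Gaussian integral. With $a:=\th(\beta\hbar/2)/\hbar$, passing to polar coordinates and using $\n{S^{d-1}}=2\pi^{d/2}/\Gamma(\tfrac d2)$ gives $\intd\n{x}^n e^{-a\n{x}^2}\dd x=\tfrac{\pi^{d/2}}{\Gamma(d/2)}\,\Gamma(\tfrac{d+n}{2})\,a^{-\frac{d+n}{2}}$, which converges exactly for $n>-d$ — the sole obstruction being integrability of $\n{x}^n$ at the origin, which is precisely the hypothesis. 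Dividing, the prefactor together with the common factor $\pi^{d/2}a^{-d/2}$ cancel, leaving
\begin{equation*}
  h^d\Tr{\n{x}^n\opg_\beta}=\frac{\Gamma(\tfrac{d+n}{2})}{\Gamma(\tfrac d2)}\,a^{-n/2}=C_{d,n}\,\Big(\frac{\hbar}{\th(\beta\hbar/2)}\Big)^{n/2}.
\end{equation*}
Since $\th(\beta\hbar/2)=\tfrac{\beta\hbar}{2}\,\theta(\beta\hbar/2)$ by the definition of $\theta$, one has $\hbar/\th(\beta\hbar/2)=2/(\beta\,\theta(\beta\hbar/2))$, which is the asserted formula. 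The identity for the momentum moments then follows from the $x\leftrightarrow\opp$ symmetry of $H=\tfrac12(\n{\opp}^2+\n{x}^2)$: conjugating by the unitary semiclassical Fourier transform $\cF$ turns $\n{\opp}^n$ into $\n{x}^n$ while fixing $H$, so $\Tr{\n{\opp}^n e^{-\beta H}}=\Tr{\n{x}^n e^{-\beta H}}$ by cyclicity, exactly as in the proof of Lemma~\ref{lem:weights_vs_gaussian}.

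The computation is essentially routine once Mehler's formula is available; the only step needing a little care is the bookkeeping of the hyperbolic prefactors, which collapses cleanly through $\sh(\beta\hbar)=2\sh(\beta\hbar/2)\ch(\beta\hbar/2)$ and — as a consistency check — recovers $Z_\beta=(2\pi/\beta)^d\operatorname{shc}(\beta\hbar/2)^{-d}=(\pi\hbar)^d/\sh(\beta\hbar/2)^d$ when $n=0$. If one wished to bypass Mehler's formula, one could instead expand $e^{-\beta H}$ in the Hermite eigenbasis of Section~\ref{sec:harmonic_oscillator} and resum the resulting series, or start from the Gaussian Weyl symbol of $e^{-\beta H}$; the kernel computation is the shortest route.
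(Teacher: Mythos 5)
Your proof is correct, and it reaches the stated formula by a genuinely different route than the paper. You work with the Mehler heat kernel of $e^{-\beta H}$ in position space, restrict to the diagonal, and compute $\Tr{\n{x}^n e^{-\beta H}}/\Tr{e^{-\beta H}}$ as a ratio of Gaussian integrals, recovering momentum moments afterwards by Fourier conjugation; the paper instead computes the Wigner transform of $\opg_\beta$ by summing the Laguerre-polynomial Wigner functions of the Hermite eigenstates via their generating function, obtaining at once the phase-space Gaussian $f_\beta(z)\propto e^{-\frac{\beta}{2}\theta(\beta\hbar/2)\n{z}^2}$, from which both the $x$- and $\opp$-moments are read off simultaneously (and which is also what feeds the $\L^2$ computation in the proof of Proposition~\ref{prop:Lp_regu_thermal_harmonic}). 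The two arguments are essentially of equal depth — both hinge on a classical generating-function identity (Mehler versus Laguerre) and the same hyperbolic half-angle bookkeeping, and your constants, the domain of convergence $n>-d$, and the reduction $\hbar/\th(\beta\hbar/2)=2/(\beta\,\theta(\beta\hbar/2))$ all check out, as does your $n=0$ consistency check against the closed form of $Z_\beta$. What your route buys is elementarity (the heat kernel is standard and the diagonal restriction is immediate); what the paper's route buys is the explicit semiclassical phase-space density, which treats $x$ and $\opp$ symmetrically without a separate Fourier argument and generalizes directly to mixed phase-space moments. One small point you may wish to make explicit: to identify $\Tr{\n{x}^n e^{-\beta H}}$ with the integral of the diagonal kernel it is cleanest to write it as $\tr\big(\n{x}^{n/2}e^{-\beta H}\n{x}^{n/2}\big)$, the trace of a positive operator, so that the kernel-on-the-diagonal formula applies without further justification; this is the same (harmless) level of implicit rigor as the paper's Wigner-transform computation.
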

	
	\begin{proof}
		Let $H_\circ = \frac{{x}^2+\n{\opp}^2}{2}$ be the one-dimensional Hamiltonian of the harmonic oscillator, and $\psi_n$ be its eigenvalues verifying
		\begin{equation*}
			H_\circ\psi_n = \(n+\tfrac{1}{2}\)\hbar\,\psi_n.
		\end{equation*}
		The Wigner transform of the corresponding density operator $\ket{\psi_n}\bra{\psi_n}$, is classically given by (see e.g. \cite[Section~5.04]{groenewold_principles_1946} or \cite[Theorem~1.105]{folland_harmonic_1989})
		\begin{equation*}
			f_n(z) = 2\(-1\)^n e^{-\n{z}^2/\hbar}\, L_n\!\(\frac{2\n{z}^2}{\hbar}\)
		\end{equation*}
		where $z = (x,\xi)$ and $L_n$ is the Laguerre polynomial of order $n$ defined by
		\begin{equation*}
			L_n(z) = \frac{e^{x}}{n!} \,\partial_x^n\!\(x^ne^{-x}\) = \sum_{k=0}^{n} \binom{n}{k} \frac{(-x)^k}{k!}
		\end{equation*}
		By the formula of the generating function of the Laguerre polynomials, we deduce that
		\begin{equation*}
			\sum_{n=0}^\infty t^n f_n(z) = \frac{2}{1+t}\, e^{-\frac{\n{z}^2}{\hbar}\frac{1-t}{1+t}}
		\end{equation*}
		Taking $t = e^{-\beta\hbar}$, we obtain the Wigner transform of the thermal state. Since in dimension $d$ it is factorized, it yields to the following formula for the Wigner transform of $Z_\beta^{-1}\,e^{-\beta H}$
		\begin{equation*}
			f_\beta(z) = \(\frac{\beta\,\theta(\beta\hbar/2)}{2\pi}\)^d  e^{-\beta \n{z}^2\theta(\beta\hbar/2)}
		\end{equation*}
		Its spatial moments are given by
		\begin{align*}
			h^d\Tr{\n{x}^n\opg_\beta} &= \iintd f_\beta(z)\n{x}^n\d z = \(\frac{\beta\,\theta(\beta\hbar/2)}{2\pi}\)^{d/2} \intd \n{x}^n e^{-\beta \n{x}^2\theta(\beta\hbar/2)}\dd x
		\end{align*}
		which yields the result.
	\end{proof}
	
	\begin{proof}[Proof of Proposition \ref{prop:Lp_regu_thermal_harmonic}]
		Since $h^d\Tr{\n{\n{x}^n G}^2} = Z_{2\beta}\,h^d\Tr{\n{x}^{2n} \opg_{2\beta}}$, then, by Lemma \ref{lem:moments_of_thermal_state}, we have the identity
		\begin{equation*}
			\Nrm{\n{x}^n G}{\L^2}^2 = \Nrm{\n{\opp}^n G}{\L^2}^2 = \frac{C_{d,2n}\,Z_{2\beta}}{\(\beta\,\theta(\beta\hbar)\)^{n}}
		\end{equation*}
		whenever $n>-d/2$. Now applying linear interpolation of Schatten norms and Lemma~\ref{lem:weights_vs_gaussian}, we obtain the intermediate Schatten norm bounds
		\begin{equation*}
			\Nrm{\n{x}^n G}{\L^p} = \Nrm{\n{\opp}^n G}{\L^p} \leq C_{d,n,p} \, Z_{2\beta}^\frac{1}{p} \, \frac{\max\(\tfrac{2}{\beta}, \sqrt{2}\,\hbar\)^{n\(\frac{1}{2}-\frac{1}{p}\)}}{\(\beta\,\theta(\beta\hbar)\)^\frac{n}{p}}
		\end{equation*}
		where $C_{d,n,p} = \, C_{d,2n}^\frac{1}{p} \, n^{n\(\frac{1}{2}-\frac{1}{p}\)}$. Finally, by Identity \eqref{eq:gradient_gaussian}, we arrive at 
		\begin{multline*}
			\Nrm{\Dh \opg_\beta}{\L^p} 
			\leq \frac{2\beta}{Z_\beta}\int_{1/2}^1  \Nrm{\opz\,G^{s}}{\L^p} \d s
			\\
			\leq C_{d,1,p} \frac{2\beta}{Z_\beta} \sup_{s\in[\frac12,1]} \, Z_{2s\beta}^\frac{1}{p} \, \frac{\max\(\tfrac{2}{s\beta}, \sqrt{2}\,\hbar\)^{\frac{1}{2}-\frac{1}{p}}}{\(s\beta\,\theta(s\beta\hbar)\)^\frac{1}{p}}
			\\
			\leq C_{d,p}\, \frac{\beta^{\frac{1}{2}-\frac{d}{p}}}{Z_\beta}  \frac{\max\(2\sqrt{2},\beta\hbar\)^{\frac{1}{2}-\frac{1}{p}}}{\(\theta(\beta\hbar)\)^\frac{1}{p}}
		\end{multline*}
		where we used the fact that $Z_\beta \leq (2\pi/\beta)^d$ and $C_{d,p} = 2^{\frac{5}{4}+\frac{2d+1}{p}}\,C_{d,1,p}\,\pi^\frac{d}{p}$. Similarly, Inequality~\eqref{eq:split_integral} implies the same bound for $\op_\beta$ with $Z_\beta$ is replaced by $Z_\mu$.
	\end{proof}

{\bf Acknowledgements.} J.C. was supported by the NSF through the RTG grant DMS- RTG 184031. C.S. acknowledges the support of the Swiss National Science Foundation through the Eccellenza project PCEFP2\_181153 and of the NCCR SwissMAP.


\bibliographystyle{abbrv} 
\bibliography{Vlasov}

\end{document}